\providecommand{\ifempty}[3]{\def\@@@temp{#1}\ifx\@@@temp\@empty#2\else#3\fi}
\providecommand*{\wgmWarn}[1]{\typeout{*** using WgMacros def of \string#1}}
\providecommand*{\ProvideWarnCommand}[2]{\providecommand*{#1}{\wgmWarn{#1}#2}}
\providecommand*{\backcompatibility}[2]{
  \newcommand*{#1}{\Fbox{USE \texttt{\string#2}}#2}
}
\providecommand*{\parensmathoper}[3][]{\ensuremath{\mathoper{#2}\ifempty{#3}{}{#1(#3#1)}}}
\providecommand*{\syntaxoper}[3]{\mathoper{#1}\BBrackets[#3]{#2}}
\providecommand*{\BBrackets}[2][]{\ifempty{#2}{}{\llbracket#2\rrbracket_{#1}}}
\providecommand*{\Braces}[2][]{\ifempty{#2}{}{( #2 )_{#1}}}
\providecommand*{\monobioperator}[3]{
  \newcommand*{#1}[2]{{\ifempty{##2}{#3##1}{##1#2##2}}}
}
\providecommand*{\newsmartprefix}[2]{\wgmWarn{\newsmartprefix}}
\providecommand*{\newsmartsprefix}[2]{\wgmWarn{\newsmartsprefix}}
\providecommand*{\newdefinition}[1]{\wgmWarn{\newdefinition}\newdefinitionaux}
\providecommand*{\newdefinitionaux}[2][]{}
\newcommand*{\CC}{\mathbb{C}}
\monobioperator{\CClub}{\sqcup}{\bigsqcup}
\monobioperator{\CCglb}{\sqcap}{\bigsqcap}
\newcommand*{\C}{\CC}
\monobioperator{\Club}{\sqcup}{\bigsqcup}
\monobioperator{\Cglb}{\Cglbsym}{\Cglbbigsym}
\newcommand*{\dfn}{\coloneq}
\providecommand{\annote}[1]{}
\providecommand*{\mathoper}[1]{\mathop{\mathit{#1}}\nolimits}
\providecommand*{\pair}[2]{{\langle #1, \, #2 \rangle}}
\providecommand*{\true}{\mathit{true}}       
\providecommand*{\false}{\mathit{false}}     
\ProvideWarnCommand{\powerset}{\wp} 
\ProvideWarnCommand{\llbracket}{[\![}
\ProvideWarnCommand{\rrbracket}{]\!]}
\ProvideWarnCommand{\coloneqq}{\mathrel{\vcentcolon=}} 
\ProvideWarnCommand{\coloneq}{\mathrel{\coloneqq}} 
\ProvideWarnCommand{\eqqcolon}{\mathrel{=\vcentcolon}} 
\ProvideWarnCommand{\ngg}{\not\gg} 
\ProvideWarnCommand{\nin}{\notin}
\ProvideWarnCommand{\notin}{\not\in}
\ProvideWarnCommand{\npreccurlyeq}{\not\preccurlyeq}
\ProvideWarnCommand{\nsimeq}{\not\simeq}
\ProvideWarnCommand{\nsqsubseteq}{\not\sqsubseteq}
\ProvideWarnCommand{\nrightarrowtriangle}{\not\rightarrowtriangle}
\ProvideWarnCommand{\nRightarrow}{\not\Rightarrow} 
\providecommand*{\ie}   {i.e.,} 
\providecommand*{\resp} {respectively}
\providecommand*{\wrt}  {{with respect to}}
\providecommand*{\href}[2]{\wgmWarn{\href}\underline{#2}}
\providecommand*{\email}[1]{\texttt{#1}}
\newcommand*{\Fbox}{\fcolorbox{black}{yellow}}
\newcommand*{\Osyms}{\Omega}
\newcommand*{\FOsyms}{\widetilde\Osyms}       
\newcommand*{\Fsyms}{\Sigma}    
\newcommand*{\MGCsym}{\mathbb{MGC}}
\newcommand*{\MGC}[1][]{\MGCsym_{#1}} 
\monobioperator{\lubS}{\mathbin{\uparrow}}{\mathbin{\uparrow}}
\backcompatibility{\domrestriction}{\restrict}
\newcommand*{\restrict}[2]{{#1}\mathord{\restriction}_{#2}} 
\newcommand*{\ProgSym}{P}
\newcommand*{\prog}[1][]{\ensuremath{\ProgSym_{\mathit{#1}}^{}}} 
\newcommand*{\CAnsw}[2]{{#1 \cdot #2}}
\backcompatibility{\cansw}{\CAnsw}
\providecommand*{\starred}[2][*] {\wgmWarn{\starred}\mathrel{\mathord{#2}{}^{#1}}}
\newcommand*{\mathC}{\mathit}
\newcommand*{\FSym}{\mathC{F}}
\newcommand*{\F}[2][\alpha]{\syntaxoper{\FSym^{#1}}{#2}{}} 
\newcommand*{\I}[1][\alpha]{\Isym{#1}} 
\newcommand*{\ProgEquiv}[3][]{#2 \thickapprox_{#1} #3} 
\newcommand*{\SemEq}[1][{\F[]{}}]{\ProgEquiv[{#1}]} 
\newcommand*{\lfp}[1][]{\lfpof[{#1}]{}}
\newcommand*{\lfpof}[2][]{\mathoper{lfp}_{#1}\Braces{#2}}
\newcommand*{\N}{\ensuremath{\mathbb{N}}}
\newcommand*{\R}{\ensuremath{\mathbb{R}}}
\newcommand*{\Z}{\ensuremath{\mathbb{Z}}}
\newcommand*{\Fp}{\ensuremath{\mathbb{F}}}
\newcommand*{\real}{\parensmathoper{\mathit{\chi_{r}}}}
\newcommand*{\err}{\parensmathoper{\mathit{\chi_{e}}}}
\newcommand*{\float}{\parensmathoper{\mathit{\chi_{f}}}}
\newcommand*{\ra}{\rightarrow}
\newcommand*{\FPtoR}{\parensmathoper{{R}}}
\newcommand*{\RtoF}[2][]{\mathit{F}_{#1}\ifempty{#2}{}{(#2)}}
\newcommand*{\FtoR}{\parensmathoper{\mathit{R}}}
\newcommand*{\FtoRB}{\parensmathoper{\mathit{R}_{\FBExprDom}}}
\newcommand*{\FtoRA}{\parensmathoper{\mathit{R}_{\FAExprDom}}}
\newcommand*{\RtoFB}{\parensmathoper{\mathit{F}_{\FBExprDom}}}
\newcommand*{\RtoFA}{\parensmathoper{\mathit{F}_{\FAExprDom}}}
\newcommand*{\RtoFProg}{\parensmathoper{\mathit{F}_{\FProg}}}
\newcommand*{\FtoRProg}{\parensmathoper{\mathit{R}_{\FProg}}}
\newcommand*{\nleqB}{\mathrel{\not\Rightarrow}}
\newcommand*{\EExprDom}{\mathbb{E}}
\newcommand*{\AExprDom} {\mathbb{A}}
\newcommand*{\BExprDom} {\mathbb{B}}
\newcommand*{\AExpr} {\mathit{A}}
\newcommand*{\BExpr} {\mathit{B}}
\newcommand*{\StmDom}{\mathbb{S}}
\newcommand*{\Stm}{\mathit{S}}
\newcommand*{\Prog}{\mathbb{P}}
\newcommand*{\FAExprDom}{\widetilde{\mathbb{A}}}
\newcommand*{\FBExprDom}{\widetilde{\mathbb{B}}}
\newcommand*{\FAExpr}{\widetilde{\mathit{A}}}
\newcommand*{\FBExpr}{\widetilde{\mathit{B}}}
\newcommand*{\FStmDom}{\widetilde{\mathbb{S}}} 
\newcommand*{\FStm}{\widetilde{\mathit{S}}} 
\newcommand*{\fstm}{\FStm} 
\newcommand*{\FProg}{\widetilde{\mathbb{P}}}
\newcommand*{\supC}{\mathbf{C}}
\monobioperator{\lubC}{\sqcup}{\bigsqcup}
\monobioperator{\glbC}{\sqcap}{\bigsqcap}
\newcommand*{\topC}{\supC}
\newcommand*{\supA}[1][\paths]{\dot{\mathbf{C}}}
\monobioperator{\lubA}{\mathrel{\dot\sqcup}}{\dot\bigsqcup}
\monobioperator{\glbA}{\mathrel{\dot\sqcap}}{\dot\bigsqcap}
\newcommand*{\fpcnst}{\widetilde{\mathit{d}}}
\newcommand*{\rcnst}{\mathit{d}}
\newcommand*{\fpProg}{\widetilde{\mathit{P}}}
\newcommand*{\FExpr}{\FStm}
\newcommand{\bexpr}{\phi}
\newcommand{\vraexpr}{\mathit{e}}
\newcommand{\vfpaexpr}{\widetilde{\vraexpr}}
\newcommand*{\fpfun}{\tilde{f}}
\newcommand*{\rfun}{f}
\newcommand*{\fpaexpr}{\widetilde{A}}
\newcommand*{\stm}{S}
\newcommand*{\rop}{\odot}
\newcommand*{\taglet}{\mathrel{\mathit{let}}}
\newcommand*{\tagin}{\mathrel{\mathit{in}}}
\newcommand*{\tagif}{\mathrel{\mathit{if}}}
\newcommand*{\tagthen}{\mathrel{\mathit{then}}}
\newcommand*{\tagelse}{\mathrel{\mathit{else}}}
\newcommand*{\tagfor}{\mathrel{\mathit{for}}}
\newcommand*{\tagelsif}{\mathrel{\mathit{elsif}}}
\newcommand*{\letStm}[3]{\taglet #1=#2 \tagin #3}
\newcommand*{\ite}[3]{\tagif #1 \tagthen #2 \ifempty{#3}{}{\tagelse #3}}
\newcommand*{\apcall}[2]{\mathit{#1}\ifempty{#2}{}{(\mathit{#2})}}
\newcommand*{\stabWarning}{\omega}
\renewcommand*{\I}{I} 
\newcommand*{\Var}{\mathbb{V}}
\newcommand*{\FVar}{\widetilde{\mathbb{V}}}
\newcommand*{\Env}{\mathit{Env}} 
\newcommand*{\Interp}{\mathbb{I}} 
\newcommand*{\env}{\nu}
\newcommand{\fvar}[1][x]{\tilde{#1}}
\newcommand{\rvar}{x}
\newcommand{\evar}{e}
\newcommand{\errvar}{\epsilon}
\newcommand*{\botEnv}{\bot_{\Env}}
\newcommand*{\botI}{\bot_{\mathit{\Interp}}}
\newcommand*{\botUnstt}{\stabWarning}
\newcommand*{\ceb}{conditional error bound}
\newcommand*{\propGuard}[3]{\ifempty{#1}{\Downarrow}{\ifempty{#2}{\Downarrow_{\cond{#1}{#2}}} {{#3}\Downarrow_{\cond{#1}{#2}}}}}
\newcommand*{\cond}[2]{(#1, #2)}
\newcommand*{\ct}[6]{\langle #1, #2 \rangle_{#6} \twoheadrightarrow (#3, #4, #5)}
\newcommand*{\rcond}{\eta}
\newcommand*{\fpcond}{\tilde{\eta}}
\newcommand*{\rres}{r}
\newcommand*{\fpres}{\tilde{v}}
\newcommand*{\stt}{\mathbf{s}}
\newcommand*{\unstt}{\mathbf{u}}
\monobioperator{\lubUnstt}{\curlyvee}{\bigcurlyvee}
\newcommand*{\emin}{e_{\mathit{min}}}
\newcommand*{\ulp}{\parensmathoper{\mathit{ulp}}}
\newcommand*{\rv}[1]{\mathit{r}\ifempty{#1}{}{_{#1}}}
\newcommand*{\ev}[1]{\mathit{e}\ifempty{#1}{}{_{#1}}}
\newcommand*{\fpv}[1]{\tilde{\mathit{v}}\ifempty{#1}{}{_{#1}}}
\newcommand*{\Rfpv}[1]{\FPtoR{\tilde{\mathit{v}}\ifempty{#1}{}{_{#1}}}}
\newcommand*{\fpexp}[1]{\ensuremath{\mathit{e}_{#1}}}
\newcommand*{\fpbase}{b}
\newcommand*{\rbool}{\phi}
\newcommand*{\fpbool}{\tilde{\phi}}
\newcommand*{\ebound}[2]{\ensuremath{\epsilon_{#1}\ifempty{#2}{}{(#2)}}}
\newcommand*{\ACSL}{ACSL}
\newcommand*{\tool}{\precisa}
\newcommand*{\precisa}{PRECiSA}
\newcommand*{\Fluctuat}{Fluctuat}
\newcommand*{\Clang}{C}
\newcommand*{\Gappa}{Gappa}
\newcommand*{\Framac}{Frama-C}
\newcommand*{\FramaC}{\Framac}
\newcommand*{\PVS}{PVS}
\newcommand*{\FPTuner}{FPTuner}
\newcommand*{\FPTaylor}{FPTaylor}
\newcommand*{\Astree}{Astr\'{e}e}
\monobioperator{\lubES}{\cup_{\EExprDom}}{\bigcup_{\EExprDom}}
\monobioperator{\glbES}{\cap_{\EExprDom}}{\bigcap_{\EExprDom}}
\monobioperator{\lubEA}{\mathrel{\dot{\oplus}}}{\dot{\bigoplus}}
\monobioperator{\glbEA}{\mathrel{\dot{\otimes}}}{\dot{\bigotimes}}
\monobioperator{\lubBA}{\mathrel{\dot{\vee}}}{\dot{\bigvee}}
\monobioperator{\glbBA}{\mathrel{\dot{\wedge}}}{\dot{\bigwedge}}
\monobioperator{\lubBS}{\mathrel{\hat{\vee}}}{\hat{\bigvee}}
\monobioperator{\glbBS}{\mathrel{\hat{\wedge}}}{\hat{\bigwedge}}
\newcommand*{\InterpC}{\mathbb{I}}
\newcommand*{\Ssem}[3]{\mathoper{\mathcal{E}}\ifempty{#1}{}{\llbracket#1\rrbracket_{#2}^{#3}}}
\newcommand*{\Dsem}[2]{\mathoper{\mathcal{P}}\ifempty{#1}{}{\llbracket#1\rrbracket_{#2}}}
\newcommand*{\Fsem}[1]{\mathoper{\mathcal{F}}\ifempty{#1}{}{\llbracket#1\rrbracket}}
\monobioperator{\lubACEB}{\dot\sqcup}{\dot\bigsqcup}
\newcommand{\evalBExpr}[2]{\mathit{eval}_{\BExprDom}(#1,#2)}
\newcommand{\evalFBExpr}[2]{\widetilde{\mathit{eval}}_{\FBExprDom}(#1,#2)}
\newcommand{\paths}{\dot\Pi}
\newcommand*{\betaPos}{\parensmathoper{\beta^{+}}}
\newcommand*{\betaNeg}{\parensmathoper{\beta^{-}}}
\newcommand*{\tauProg}{\parensmathoper{\tau}}
\newcommand*{\tauStm}{\parensmathoper{{\tau}_{\StmDom}}}
\newcommand*{\tauVar}{\parensmathoper{{\tau}_{\FVar}}}
\newcommand*{\tauProgDecl}{\parensmathoper{\bar\tau}}
\newcommand*{\fv}{\parensmathoper{\mathit{fv}}}
\newcommand*{\errVar}{\parensmathoper{\epsilon_\mathit{var}}}
\newcommand*{\errVarBeta}{\parensmathoper{\epsilon_\mathit{var}^{\beta}}}
\newcommand*{\fpop}{\widetilde{\rop}}
\newcommand*{\acc}{\mathit{acc}}
\newcommand*{\tcoa}{\widetilde{\mathit{tcoa}}}
\newcommand*{\vmd}{\widetilde{\mathit{vmd}}}
\newcommand*{\rtcoa}{\mathit{tcoa}}
\newcommand*{\rvwcv}{\mathit{vwcv}}
\newcommand*{\rvmd}{\mathit{vmd}}
\newcommand*{\tautcoa}{\widetilde{\mathit{tcoa}}^{\tau}}
\newcommand*{\tauvwcv}{\widetilde{\mathit{vwcv}}^{\tau}}
\newcommand*{\tauvmd}{\widetilde{\mathit{vmd}}^{\tau}}
\newcommand*{\VWCV}{\mathit{VWCV}}
\newcommand*{\ZTHR}{\mathit{ZTHR}}
\newcommand*{\TCOA}{\mathit{TCOA}}
\newcommand*{\forite}[4]{\mathit{for}\ifempty{#1}{}{(#1,#2,#3,#4)}}
\newcommand*{\fpprog}{\fpProg}
\newcommand*{\tprog}{{\widetilde{\prog}}^{\tau}}
\begin{document}

\title{Automatic generation and verification of test-stable floating-point code}

\author{Laura Titolo \and Mariano Moscato \and  C\'{e}sar A. Mu\~{n}oz }

\institute{National Institute of Aerospace,\\
\email{\{laura.titolo,mariano.moscato\}@nianet.org}
\and
NASA Langley Research Center,\\
\email{\{cesar.a.munoz\}@nasa.gov}
}

\maketitle

\begin{abstract}

Test instability in a floating-point program occurs when the control
flow of the program diverges from its ideal execution assuming real arithmetic.
This phenomenon is caused by the presence of round-off errors
that affect the evaluation of arithmetic
expressions occurring in conditional statements.  Unstable tests may
lead to significant errors in safety-critical applications that depend
on numerical computations.  Writing programs that take into
consideration test instability is a difficult task that requires
expertise on finite precision computations and rounding errors.  This
paper presents a toolchain to automatically generate and verify a
provably correct test-stable floating-point program from a functional
specification in real arithmetic.  The input is a real-valued program
written in the Prototype Verification System (PVS) specification
language and the output is a transformed floating-point C program
annotated with ANSI/ISO C Specification Language (ACSL) contracts.
These contracts relate the floating-point program to its functional
specification in real arithmetic. The transformed program detects if unstable
tests may occur and, in these cases, issues a warning and terminate.
An approach that combines the Frama-C analyzer, the PRECiSA round-off
error estimator, and PVS is proposed to
automatically verify that the generated program code is correct in the
sense that, if the program terminates without a warning,
it follows the same computational path as its real-valued functional
specification.

\end{abstract}

\keywords{Floating-Point numbers, Round-off error analysis, Program transformation} 

\section{Introduction}
\label{sec:intro}

The development of software that depends on  floating-point computations is particularly challenging due to the presence of round-off errors in computer arithmetic.
Round-off errors originate from the difference between real numbers and their finite precision representation. Since round-off errors accumulate during numerical computations, they may significantly affect  the evaluation of both arithmetic and Boolean expressions.
In particular, \emph{unstable tests} occur when the guard of a conditional statement contains a floating-point expression whose round-off error makes the actual Boolean value of the guard differ from the value that would be obtained assuming real arithmetic.
The presence of unstable tests amplifies, even more, the divergence between the output of a floating-point program and its ideal evaluation in real arithmetic.
This divergence may lead to catastrophic consequences in safety-critical applications.

Writing software that takes into consideration how unstable tests affect the execution flow of floating-point programs requires a deep comprehension of floating-point arithmetic.
Furthermore, this process can be tedious and error-prone for programs with function calls and complex mathematical expressions.
This paper presents a \emph{fully automatic} toolchain to generate and verify test-stable floating-point C code from a functional specification in real arithmetic.
This toolchain consists of:
\begin{itemize}
    \item a formally-verified program transformation that generates
      and instruments a floating-point program to detect unstable tests,
    \item \precisa~\cite{MoscatoTDM17,TitoloFMM18}, a static analyzer
    that computes sound estimations of the round-off error that may occur in a floating-point program,
    \item Frama-C~\cite{KirchnerKPSY15}, a collaborative tool suite for the analysis of C code, and
    \item the Prototype Verification System (\PVS)~\cite{OwreRS92}, an
      interactive theorem prover for higher-order logic.
\end{itemize}
The input of the toolchain is a PVS specification of a numerical
algorithm  in real arithmetic, the desired floating-point format (single or double precision), and, optionally, initial ranges for the input variables. 
This program specification is straightforwardly implemented using floating-point arithmetic.
This is done by replacing each real-valued operator by its
floating-point counterpart. Furthermore, each real-number constant and
variable is rounded to its closest floating-point in the chosen format and rounding modality.
Then, the proposed program transformation is applied.
Numerically unstable tests are replaced with more restrictive ones that preserve the control flow of the real-valued original specification.
These new tests take into consideration the round-off error that may occur when the expressions of the original program are evaluated in floating-point arithmetic.
In addition, the transformation instruments the program to emit a warning when the floating-point flow may diverge \wrt{} the original real number specification.

The transformed program is expressed in C syntax along with ACSL
Specification Language annotations stating the relationship
between the floating-point C implementation and its functional
specification in real arithmetic.
To this end, the round-off errors that occur in conditional tests and in the overall computation of the program are 
soundly estimated by the static analyzer \precisa{}. 
%
The correctness property of the C program is specified as an ACSL
post-condition stating that if the program terminates without a
warning, it follows the same computational path as the real-valued
specification, \ie{} all unstable tests are detected.

An extension to the \Framac{}/WP plug-in (Weakest Precondition calculus) is implemented to automatically generate verification conditions in the \PVS{} language from the annotated C code.
These verification conditions encode
the correctness of the transformed program
and are automatically discharged by proof
strategies implemented in PVS.
Therefore, no expertise in theorem proving nor knowledge on
floating-point arithmetic is required from the user to verify the
correctness of the generated C program.

The contributions of this work are summarized below.
\begin{itemize}
    \item A new and enhanced version of the program trasformation initially defined in \cite{TitoloMFM18} that adds support for function calls, bounded recursion (for-loops), and symbolic parameters.
    \item A PVS formalization of the correctness of the proposed transformation.
    \item An implementation of the proposed transformation integrated
      within the static analyzer \precisa{}.
    \item An extension of the Frama-C/WP plug-in to generate proof obligations in the PVS specification language.
    \item Proof strategies in PVS to automatically discharge the verification
      conditions generated by the Frama-C/WP plug-in. 
\end{itemize}

The remainder of the paper is organized as follows.
\smartref{sec:fp_err} provides technical background on floating-point numbers, round-off errors, and unstable tests.
A denotational semantics that collects information about the differences between floating-point and real computational flows is presented  in \smartref{sec:sem}.
The proposed program transformation to detect test instability is described in \smartref{sec:transformation}.
\smartref{sec:cgen} illustrates the use of the proposed toolchain to
automatically generate and verify a probably correct floating-point C program from a PVS real-valued specification. 
\smartref{sec:related} discusses related work and \smartref{sec:concl} concludes the paper.

\section{Floating-Point Numbers, Round-Off Errors, and Unstable Tests}
\label{sec:fp_err}

Floating-point numbers~\cite{IEEE754floating} are finite precision
representations of real numbers widely used in computer programs.
In this work, a floating-point number, or a {\em float}, is formalized as a pair of integers $(m,\fpexp{}) \in \Z^2$, where $m$ is called the \emph{significand} and $\fpexp{}$ the \emph{exponent} of the float~\cite{Daumas2001,BoldoMunoz06}.
A floating-point \emph{format} $f$ is defined as a pair of integers $(p,\emin)$, where $p$ is called the \emph{precision} and $\emin$ is called the \emph{minimal exponent}. 
Given a base $\fpbase$, a pair $(m,\fpexp{}) \in \Z^2$ represents a floating-point number in the format $(p,\emin)$ if and only if it holds that $|m| < \fpbase^p$ and $-\emin \leq e$.
For instance, IEEE single and double precision floating-point numbers are specified by the formats $(24, 149)$ and $(53, 1074)$, \resp.
Henceforth, $\Fp$ will denote the set of floating-point numbers and the expression $\fpv{}$ will denote a floating-point number $(m,e)$ in $\Fp$.
A conversion function $\FPtoR{}: \Fp\rightarrow \R$ is defined to refer to the real number represented by a given float, \ie{} $\FPtoR{(m,\fpexp{})} = m \cdot b^{\fpexp{}}$, where $b$ is the base of the representation.
The expression $\RtoF[f]{\rv{}}$ denotes the floating-point number in format $f$ \emph{closest} to $\rv{}$, \ie{} the rounding of $\rv{}$. The format $f$ will be omitted when clear from the context or irrelevant.

\begin{definition}[Round-off error]
Let $\fpv{}\in\Fp$ be a floating-point number that represents a real number $\rv{} \in \R$, the difference $|\Rfpv{} - \rv{}|$ is called the \emph{round-off error} (or \emph{rounding error}) of $\fpv{}$ \wrt{} $\rv{}$.
\end{definition}
The \emph{unit in the last place} (\emph{ulp}) is a measure of the precision of a floating-point number  as a representation of a real number.
Given $r\in\R$, $\ulp{r}$ represents the difference between two closest consecutive floating-point numbers $\fpv{1}$ and $\fpv{2}$ such that $\fpv{1}\leq r \leq \fpv{2}$ and $\fpv{1}\neq \fpv{2}$.
The \ulp{} can be used to bound the round-off error of a real number $\rv{}$ \wrt{} its floating-point representation in the following way:
\begin{equation}\label{ulp_def}
    |\FPtoR{\RtoF{\rv{}} }- \rv{}\,|\leq \tfrac{1}{2} \ulp{\rv{}}.
\end{equation}

Round-off errors accumulate through the computation of mathematical operators.
Therefore, an initial error that seems negligible may become significantly larger when combined and propagated inside nested mathematical expressions.
The \emph{accumulated round-off error} is the difference between a floating-point expression $\fpop(\fpv{1},\ldots,\fpv{n})$ and its real-valued counterpart $\rop(r_1,\ldots,r_n)$ and it
depends on
~(a) the error introduced by the application of $\fpop$ versus $\rop$ and
~(b) the propagation of the errors carried out by the arguments, i.e., the difference between $\fpv{i}$ and $r_i$, for  $1 \le i \le n$, in the application.
Henceforth, it is assumed that for any floating-point operator of interest $\fpop$, there exists an error bound function $\ebound{\fpop}{}$ such that, if $| \Rfpv{i} -\rv{i}|\leq \ev{i}$ holds for all $1 \le i \le n$, then:
\begin{equation}\label{eq:approx_err}
    \left|\FPtoR{\fpop(\fpv{i})_{i=1}^n}-\rop(\rv{i})_{i=1}^n\right|\leq
    \ebound{\fpop}{\rv{i},\ev{i}}_{i=1}^n.
\end{equation}
For example, in the case of the sum, the accumulated round-ff error is defined as $\ebound{\tilde{+}}{\rv{1}, \ev{1}, \rv{2}, \ev{2}} \dfn
\ev{1} + \ev{2} + \nicefrac{1}{2} \ulp{|\rv{1} + \rv{2}| + \ev{1} + \ev{2}}$.
More examples of error bound functions can be found in \cite{MoscatoTDM17,TitoloFMM18}.

The evaluation of Boolean expressions is also affected by rounding errors.
When a Boolean expression $\phi$ evaluates differently in real and floating-point arithmetic, $\phi$ is said to be \emph{unstable}.
The presence of unstable tests amplifies the
effect of round-off errors in numerical programs since the
computational flow of a floating-point program
may significantly diverge from the ideal execution of its 
representation in real arithmetic. 
In fact, the output of a floating-point program is not only directly influenced by rounding errors accumulating in the mathematical expressions, but also by the error of taking the incorrect branch in the case of unstable tests.

Given a set $\FOsyms$ of pre-defined floating-point operations, the corresponding set $\Osyms$ of operations over real numbers,
a set $\Fsyms$ of function symbols, a finite set $\Var$ of variables representing real values, and a finite set $\FVar$ of variables representing floating-point values, where $\Var$ and $\FVar$ are disjoint, the sets $\AExprDom$ and $\FAExprDom$ of arithmetic expressions over real numbers and over floating-point numbers, \resp, are defined by the following grammars.
\begin{align*}
&\AExpr  ::= \rcnst  \mid x \mid \rop(\AExpr,\ldots,\AExpr) \mid \rfun(\AExpr,\dots,\AExpr),\\
&\FAExpr ::= \fpcnst \mid \fvar \mid \fpop(\FAExpr,\ldots,\FAExpr) \mid \fpfun(\FAExpr,\dots,\FAExpr),
\end{align*}
where $\AExpr \in \AExprDom$, $\rcnst \in \R$, $x \in \Var$, $\rfun,\fpfun\in\Fsyms$, $\rop \in \Osyms$, $\FAExpr \in \FAExprDom$, $\fpcnst \in \Fp$, $\fvar \in \FVar$, and $\fpop \in \FOsyms$.
It is assumed that there is a function $\real{}: \FVar \ra \Var$ that associates to each floating-point variable $\fvar$ a variable $x \in \Var$ representing the real value of $\fvar$.
The function $\FtoRA{} : \FAExprDom \rightarrow \AExprDom$ converts an arithmetic expression on floating-point numbers to an arithmetic expression on real numbers.
It is defined by replacing each floating-point operation with the corresponding one on real numbers and by applying $\FPtoR{}$ and $\real{}$ to floating-point values and variables, respectively.
Conversely, the function $\RtoFA{} : \AExprDom \rightarrow \FAExprDom$ converts a real expression into a floating-point one by applying the rounding $\RtoF{}$ to constants and variables and by replacing each real-valued operator with the corresponding floating-point one.
By abuse of notation, floating-point expressions are interpreted as their real number evaluation when occurring inside a real-valued expression.

Boolean expressions over the reals $\BExprDom$ and over the floats $\FBExprDom$ are defined by the following grammar,
\begin{align*}
   \BExpr &::= \true \mid \false \mid \BExpr \wedge \BExpr
               \mid \BExpr \vee \BExpr \mid \neg \BExpr
               \mid \AExpr < \AExpr \mid \AExpr = \AExpr\\
   \FBExpr &::= \true \mid \false \mid \FBExpr \wedge \FBExpr \mid \FBExpr \vee \FBExpr \mid \neg \FBExpr \mid
                \FAExpr < \FAExpr \mid \FAExpr = \FAExpr
\end{align*}
where $\AExpr\in\AExprDom$ and $\FAExpr\in\FAExprDom$.
The conjunction~$\wedge$, disjunction~$\vee$, negation~$\neg$, $\true$, and $\false$ have the usual classical logic meaning.
The functions $\FtoRB{} : \FBExprDom \rightarrow \BExprDom$ and $\RtoFB{} : \BExprDom \rightarrow \FBExprDom$ convert a Boolean expression on floating-point numbers to a Boolean expression on real numbers and vice-versa.
They are defined, \resp{}, as the natural extension of $\FtoRA{}$ and $\RtoFA{}$ to Boolean expressions.
Given a variable assignment $\sigma: \Var \rightarrow \R$, $\evalBExpr{\sigma}{B} \in \{\true,\false\}$ denotes the evaluation of the real Boolean expression $B$.
Similarly, given $\widetilde{B}\in\FBExprDom$ and $\widetilde{\sigma}: \FVar \rightarrow \Fp$, $\evalFBExpr{\widetilde{\sigma}}{\widetilde{B}} \in \{\true,\false\}$ denotes the evaluation of the floating-point Boolean expression $\widetilde{B}$.
\begin{definition}[Unstable Test]\label{def:stable}
    A test $\tilde{\phi}\in \FBExpr$ is unstable if there exist
    two assignments $\tilde\sigma: \{\fvar_1,\dots,\fvar_n\} \rightarrow \Fp$
    and $\sigma: \{\real{\fvar_1},\dots,\real{\fvar_n}\} \rightarrow \R$ such that for all
    $i\in\{1,\dots,n\}$,
    $\sigma(\real{\fvar_i}) = \mathit{R}({\tilde{\sigma}(\fvar_i)})$ and
    $\evalBExpr{{\sigma}}{\FtoRB{\tilde{\phi}}}$ $ \neq \evalFBExpr{\tilde\sigma}{\tilde{\phi}}$.
    Otherwise, the conditional expression is said to be \emph{stable}.
\end{definition}
In other words, a test $\tilde{\phi}$ is unstable when there exists an assignment from the free variables $\tilde{x}_i$ in $\tilde{\phi}$ to $\Fp$ such that $\tilde{\phi}$ evaluates to a different Boolean value with respect to its real-valued counterpart $\FtoRB{\tilde{\phi}}$.
The evaluation of a conditional statement $\tagif \phi \tagthen A \tagelse B$ is said to
follow an \emph{unstable path} when $\phi$ is unstable and it is
evaluated differently in real and floating-point arithmetic. When the
flows coincide, the evaluation is said to follow a \emph{stable path}.

\section{A Denotational Semantics for Floating-Point Programs}
\label{sec:sem}

This section illustrates a denotational semantics to reason about round-off errors and test instability in floating-point programs.
This semantics collects information about both real and floating-point path conditions and soundly estimates the difference between the ideal real-valued result and the actual floating-point one.
This information is collected symbolically. Therefore, the semantics
supports symbolic parameters for which the numerical inputs are unknown.
This semantics is an extension of the one presented in
\cite{TitoloFMM18} and it has been implemented in the static analyzer
\precisa{}, which computes provably correct over-estimations of the
round-off errors occurring in a floating-point program.

The language considered in this work is a simple functional language with binary and $n$-ary conditionals, let-in expressions, arithmetic expressions, function calls, for-loops, and a warning exceptional statement $\stabWarning$.
The syntax of
\emph{floating-point program expressions} $\FStm$ in $\FStmDom$ is given by the following grammar.
\begin{equation}
    \label{eq:lang}
\begin{aligned}
   \FStm ::= & \fpcnst \mid \fvar \mid \fpop(\FAExpr,\ldots,\FAExpr)
              \ \mid \fpfun(\FAExpr,\dots,\FAExpr)
              \ \mid\ \letStm{\fvar}{\FAExpr}{\FStm}\\
              & \mid\ \ite{\FBExpr}{\FStm}{\FStm}
              \mid\ \tagif \FBExpr \tagthen \FStm
              [\tagelsif \FBExpr \tagthen \FStm]_{j=1}^{m} \tagelse \FStm\\
              & \mid\ \forite{i_{0}}{i_{n}}{\acc_{0}}{\lambda(i,\acc). \FStm}
              \ \mid\ \stabWarning,
\end{aligned}
\end{equation}
where $x \in \Var$, $\fpfun\in\Fsyms$, $\rop \in \Osyms$, $\FAExpr\in\FAExprDom$, $\FBExpr\in\FBExprDom$, $\fpcnst, \acc_0 \in \Fp$, $\fvar,i,\acc \in \FVar$, $\fpop \in \FOsyms$, $i_{0}\in\N$ and $n, i_{n} \in\N^{>0}$.
The notation $[\tagelsif \FBExpr \tagthen \FStm]_{j=1}^{m}$ denotes a
list of $m$ conditional $\tagelsif$ branches.

Bounded recursion is added to the language  as syntactic sugar using the $\tagfor$ construct.
The $\forite{}{}{}{}$ expression emulates a for loop where $i$ is the control variable that ranges from $i_{0}$ to $i_{n}$, $\acc$ is the variable where the result is accumulated with initial value $\acc_{0}$, and $\FStm$ is the body of the loop.
For instance, $\forite{1}{10}{0}{\lambda(i,\acc). i + \acc}$
represents the value $f(1,0)$, where $f$ is the recursive function
$f(i,\acc)\ \equiv\ \tagif i > 10 \tagthen \acc \tagelse f(i+1,\acc+i)$.

A \emph{floating-point program} $\fpProg$ is defined as a set of \emph{function declarations} of the form
$\fpfun(\fvar_1,\ldots,\fvar_n)=\FExpr$, where $\fvar_1,\ldots,\fvar_n$ are pairwise distinct variables in $\FVar$ and all free variables appearing in $\FExpr$ are in $\{\fvar_1, \ldots,\fvar_n\}$.
The natural number $n$ is called the {\em arity} of $\fpfun$.
Henceforth, it is assumed that programs are well-formed in the sense that, in a program $\fpprog$, for every function call $\fpfun(\widetilde{A}_1,\ldots,\widetilde{A}_n)$ that occurs in the body of the declaration of a function $\tilde{g}$, a unique function $\fpfun$ of arity $n$ is defined in $\fpprog$ before $\tilde{g}$.
Hence, the only recursion allowed is the one provided by the for-loop construct.
The set of floating-point programs is denoted as $\FProg$.

The proposed semantics collects for each combination of real and floating-point program paths: the real and floating-point path conditions, and three symbolic expressions representing:
(1) the value of the output assuming the use of real arithmetic,
(2) the value of the output assuming floating-point arithmetic, and
(3) an over-approximation of the maximum round-off error occurring in the computation.
In addition, a flag is provided indicating if the element refers to either a stable or an unstable path.
Since the semantics collects information about real and floating-point execution paths, it is possible to consider the error of taking the incorrect branch compared to the ideal execution using exact real arithmetic.
This enables a sound treatment of unstable tests.
The previous information is stored in a \emph{conditional error bound}.
\begin{definition}[Conditional error bound]\label{def:ceb}
    A \emph{conditional error bound} is an expression of the form
    $\ct{\rcond}{\fpcond}{\rres}{\fpres}{e}{t}$, where
    $\rcond \in \BExprDom$,
    $\fpcond \in \FBExprDom$,
    $\rres \in \AExprDom\cup\{\botUnstt\}$,
    $\fpres \in \FAExprDom\cup\{\botUnstt\}$, and
    $e \in \AExprDom$,$t\in\{\stt, \unstt\}$.
\end{definition}
Intuitively, $\ct{\rcond}{\fpcond}{\rres}{\fpres}{e}{t}$ indicates that if both conditions $\rcond$ and $\fpcond$ are satisfied, the output of the ideal real-valued implementation of the program is $\rres$, the output of the floating-point execution is $\fpres$, and the round-off error is at most $e$, \ie{} $|\rres - \fpres | \leq e$.
The sub-index $t$ is used to mark by construction whether a conditional error bound corresponds to an unstable path, when $t = \unstt$, or to a stable path, when $t = \stt$.

Let $\topC$ be the set of all conditional error bounds, and $\C \dfn \wp (\topC)$ be the domain formed by sets of conditional error bounds.
An \emph{environment} is defined as a function mapping a variable to a set of conditional error bounds, \ie{} $\Env = \FVar \ra \C$.
The empty environment is denoted as $\botEnv$ and maps every variable to the empty set $\emptyset$.
Let $\MGC \dfn \{ \fpfun(\fvar_1,\dots, \fvar_n) \mid \fpfun\in \Fsyms, \fvar_1,\dots, \fvar_n\in\FVar\}$ be the set of all possible function calls.
An \emph{interpretation} is a function $\I{} \colon \MGC \to \C$ modulo variance\footnote{Two functions $\I_1,\I_2 \colon \MGC \to \C$ are variants if for each $m\in \MGC$ there exists a renaming  $\rho$ such that $(\I_1(m)) \rho = \I_2(m \rho)$.}.
The set of all interpretations is denoted as $\InterpC$.
The empty interpretation is denoted as $\botI$ and maps everything to $\emptyset$.

Given $\env \in \Env$ and $\I\in\Interp$, the semantics of program expressions is defined in \smartref{fig:sem} as a function $\Ssem{}{}{} : \FStmDom \times \Env \times \Interp \ra \C$ that returns the set of conditional error bounds representing the possible real and floating-point results, their difference, and their corresponding path conditions. 
Conditional error bounds of the form $\ct{\rcond}{\fpcond}{\rres}{\fpres}{e}{t}$ whose conditions' conjunction is unsatisfiable, \ie{} $\rcond \wedge \fpcond \nleqB \false$, are considered spurious and they are dropped from the semantics since they do not correspond to an actual trace of the program.
In the following, the non-trivial cases are described.
\begin{description}
\item[Variable.]
The semantics of a variable $\fvar\in\FVar$ consists of two cases. If $\fvar$ belongs to the environment, then the variable has been previously bound to a program expression $\FStm$ through a let-in expression. 
In this case, the semantics of $\fvar$ is exactly the semantics of $\FStm$.
If $\fvar$ does not belong to the environment, then $\fvar$ is a parameter of the function.
Here, a new \ceb{} is added with two placeholder $\real{\fvar}$ and $\err{\fvar}$, representing the
real value and the error of $\fvar$, respectively.
\item[Mathematical Operator.]
The semantics of a floating-point operation $\fpop$ is computed by composing the semantics of its operands.
The real and floating-point values are obtained by applying the corresponding arithmetic operation to the values of the operands.
The effect of the warning construct $\botUnstt$ is propagated in the arithmetic expressions.
Thus, it is assumed that for all floating-point and real operator $\rop{}$, $\rop({r}_i)_{i=1}^{n} = \botUnstt$ when $r_j = \botUnstt$ for some $j \in \{1,\dots,n\}$.
The new conditions are obtained as the combination of the conditions of the operands.
The new conditional error bounds for $\fpop(\tilde{v}_i)_{i=1}^{n}$ are marked unstable if any of the conditional error bounds in the semantics of $\tilde{v}_i$ is unstable. 
$\lubUnstt{}{}_{i=1}^{n}{t_i}$ is defined as $\unstt$ if  it exists $j \in \{1,\dots,n\}$ such that $t_j = \unstt$, otherwise it is defined as $\stt$.
\item[Let-in expression.]
The semantics of the expression $\letStm{\fvar}{\FAExpr}{\FStm}$ updates the current environment by associating with variable $\fvar$ the semantics of expression $\FAExpr$.

\item[Binary conditional.]The semantics of the conditional $\ite{\FBExpr}{\FStm_1}{\FStm_2}$ uses an auxiliary operator $\propGuard{}{}{}$. 
%
\begin{definition}[Condition propagation operator]
    \label{def:prop}
    Let $b\in \BExprDom$ and $\tilde{b} \in \FBExprDom$,
    $\propGuard{b}{\tilde{b}}{\ct{\bexpr}{\tilde{\bexpr}}{\rres}{\fpres}{e}{t}} \dfn
    \ct{\bexpr\wedge b}{\tilde{\bexpr}\wedge \tilde{b}}{\rres}{\fpres}{e}{t}$
    if $\phi \wedge b \wedge \tilde{\phi}\wedge\tilde{b} \nleqB \false$,
    otherwise it is undefined. The definition of
    $\propGuard{}{}{}$ naturally extends to
    sets of conditional error bounds, i.e.,  let $C\in\C$, 
    $\propGuard{b}{\tilde{b}}{C} = \bigcup_{c \in C} \propGuard{b}{\tilde{b}}{c}$.
\end{definition}
The semantics of $\FStm_1$ and $\FStm_2$ are enriched with information about the fact that real and floating-point control flows match, \ie{} both $\FBExpr$ and $\FtoRB{\FBExpr}$ have the same value.
In addition, new conditional error bounds are built to model the unstable cases when real and floating-point control flows do not coincide and, therefore, real and floating-point computations diverge.
For example, if $\FBExpr$ is satisfied but $\FtoRB{\FBExpr}$ is not, the $\mathit{then}$ branch is taken in the floating-point computation, but the $\mathit{else}$ would have been taken in the real one.
In this case, the real condition and its corresponding output are taken from the semantics of $\FStm_2$, while the floating-point condition and its corresponding output are taken from the semantics of $\FStm_1$.
The condition $\cond{\neg\FtoRB{\FBExpr}}{\FBExpr}$ is propagated in order to model that $\FBExpr$ holds but $\FtoRB{\FBExpr}$ does not.
The conditional error bounds representing this case are marked with $\unstt$.

\item[N-ary conditional.]
The semantics of an n-ary conditional is composed of stable and unstable cases.
The stable cases are built from the semantics of all the program sub-expressions $\FStm_i$ by enriching them with information stating that the correspondent guard and its real counterpart hold and all the previous guards and their real counterparts do not hold.
All the unstable combinations are built by combining the real parts of the semantics of a program expression $\FStm_i$ and the floating-point contributions of a different program expression $\FStm_j$.  
In addition, the operator $\propGuard{}{}{}$ is used to propagate the information that the real guard of $\FStm_i$ and the floating-point guard of $\FStm_j$ hold, while the guards of the previous branches do not hold.

\item[Function call.] The semantics of a function call combines the conditions coming from the interpretation of the function and the ones coming from the semantics of the parameters. Variables representing real values, floating-point values, and errors of formal parameters are replaced with the expressions coming from the semantics of the actual parameters.
The notation $\mathit{expr}[x \leftarrow e]$ denotes the substitution of $x$ for $e$ in the expression $\mathit{expr}$.
\end{description}

The semantics of a program is a function $\Fsem{}{} : \FProg \ra \C$ defined as the least fixed point of the immediate consequence  operator $\Dsem{}{} : \FProg \times \Interp \ra \C$, \ie{} given $\fpprog\in\FProg$, $\Fsem{\fpprog} \dfn \lfp (\Dsem{\fpprog}{\botI})$, which is defined as follows for each function symbol $\fpfun$ defined in $\fpprog$.
\begin{align}\label{eq:Dsem}
    \Dsem{\fpprog}{\I} (\apcall{\fpfun}{\fvar_1 \dots \fvar_n}) \dfn
    \Ssem{\FStm}{\botEnv}{\I}\
    \text{if}\ \apcall{\fpfun}{\fvar_1 \dots \fvar_n}={\FStm} \in \fpprog.
\end{align}
The least fixed point of $\Dsem{}{}$ is guaranteed to exist from the Knaster-Tarski Fixpoint theorem~\cite{Tarski55} since $\Dsem{}{}$ is monotonic over $\C$.
This least fixed-point converges in a finite number of steps for the programs with bounded recursion considered in this paper.
\begin{example}
    \label{ex:tcoa}
    Consider the function $\tcoa$ that is part of
DAIDALUS\footnote{DAIDALUS is available from \url{https://shemesh.larc.nasa.gov/fm/DAIDALUS/.}} (Detect and Avoid
Alerting Logic for Unmanned Systems), a NASA library that implements
detect-and-avoid algorithms for unmanned aircraft systems.
This function computes the time to co-altitude of two vertically
converging aircraft given their relative vertical position $\tilde{s}$ and
relative vertical velocity $\tilde{v}$.
When the aircraft air vertically diverging, the function returns 0.
\begin{align*}
    &\tcoa(\tilde{s}, \tilde{v}) = \tagif \tilde{s} \tilde{*} \tilde{v} < 0
    \tagthen -(\tilde{s}\tilde{/} \tilde{v}) \tagelse 0
\end{align*}
The semantics of $\tcoa(\tilde{s}, \tilde{v})$ consists of four conditional error bounds: 
\small{\begin{align*}
 \{
&\ct{\real{\tilde{s}} *\real{\tilde{v}} < 0}{\tilde{s}\tilde{*}\tilde{v} < 0}
   {-(\real{\tilde{s}}/ \real{\tilde{v}})}{-(\tilde{s}\tilde{/} \tilde{v})}
   {\\&\hspace{44ex}\ebound{\tilde{/}}{\real{\tilde{s}},\err{\tilde{s}},\real{\tilde{v}},\err{\tilde{v}}}}{\stt},\\
&\ct{\real{\tilde{s}}*\real{v} \geq 0}{\tilde{s}\tilde{*}\tilde{v} \geq 0}
   {0}{0}{0}{\stt},\\
&\ct{\real{\tilde{s}} *\real{\tilde{v}} \geq 0}{\tilde{s}\tilde{*}\tilde{v} < 0}
   {0}{-(\tilde{s}\tilde{/} \tilde{v})}{\ebound{\tilde{/}}
   {\real{\tilde{s}},\err{\tilde{s}},\real{\tilde{v}},\err{\tilde{v}}}
   \\&\hspace{44ex}+|0-(\real{\tilde{s}}/ \real{\tilde{v}})|}{\unstt},\\
&\ct{\real{\tilde{s}} *\real{\tilde{v}} < 0}{\tilde{s}\tilde{*}\tilde{v} \geq 0}
   {-(\real{\tilde{s}}/ \real{\tilde{v}})}{0}{|0-(\real{\tilde{s}}/ \real{\tilde{v}})|}{\unstt}\}.
\end{align*}
}
%
%
The first two elements correspond to the cases where real and floating-point computational flows coincide.
In these cases, the round-off error is bounded by
$\ebound{\tilde{/}}{\real{\tilde{s}},\err{\tilde{s}},\real{\tilde{v}},\err{\tilde{v}}}$ when the $\tagthen$ branch is taken, otherwise, it is 0 since the integer $0$ is exactly representable as a float.
The other two elements model the unstable paths. In these cases, the
error is computed as the difference between the output of the two branches plus the accumulated round-off error of the floating-point result.
\end{example}

\begin{figure}[t!]
{\small
\begin{align*}
    %
    %
	&\Ssem{\fpcnst}{\env}{\I} \dfn
    \{\ct{\true}{\true}{\FtoR{\fpcnst}}{\fpcnst}{|\FtoR{\fpcnst} - \fpcnst|}{\stt}\}
    \\[1ex]
    %
    %
    &\Ssem{\stabWarning}{\env}{\I} \dfn \{\ct{\true}{\true}{\botUnstt}{\botUnstt}{0}{\stt}\}
    \\[1ex]
    %
    %
	&\Ssem{\fvar}{\env}{\I}  \dfn 
    \begin{cases}
     \{\ct{\true}{\true}{\real{\fvar}}{\fvar}{\err{\fvar}}{\stt}\} &\text{if $\env(\fvar)=\emptyset$}\\
     \env(\fvar)
     &\text{otherwise}
     \end{cases}
     \\[1ex]
    %
    %
	&\Ssem{\fpop(\FAExpr_{i})_{i=1}^{n}}{\env}{\I} \dfn
     \lubC{}{}\{
     \begin{aligned}[t]
     &\ct{{\textstyle\bigwedge_{i=1}^{n}}\bexpr_i}
         {{\textstyle\bigwedge_{i=1}^{n}}\tilde\bexpr_i}{\rop(r_i)_{i=1}^{n}}
	     {\fpop(\tilde{v}_i)_{i=1}^{n}}
         {\ebound{\fpop}{\rv{i},\ev{i}}_{i=1}^n}
         {\lubUnstt{}{}_{i=1}^{n}{t_i}}
     \mid \forall 1 \le i \le n \colon \\
     &\ct{\bexpr_i}{\tilde{\bexpr}_i}{r_i}{\tilde{v}_i}{e_i}{t_i}\in \Ssem{\FAExpr_i}{\env}{\I},
     {\textstyle\bigwedge_{i=1}^{n}}\bexpr_i \wedge
     {\textstyle\bigwedge_{i=1}^{n}}\tilde\bexpr_i \nleqB \false\}
     \end{aligned}\\[1ex]
     %
    %
    %
    &\Ssem{\letStm{\fvar}{\FAExpr}{\FStm}}{\env}{\I}
      \dfn \Ssem{\FStm}{\env[\fvar \mapsto \Ssem{\FAExpr}{\env}{\I}]}{\I}
    \\[1ex]
    %
    %
    &\Ssem{\ite{\FBExpr}{\FStm_1}{\FStm_2}}{\env}{\I} \dfn
    \propGuard{\FtoRB{\FBExpr}}{\FBExpr}{\Ssem{\FStm_1}{\env}{\I}} \lubC{{}}{{}}\
    \propGuard{\neg\FtoRB{\FBExpr}}{\neg \FBExpr}{\Ssem{\FStm_2}{\env}{\I}} \lubC{{}}{{}} \\
    &\quad \lubC{}{}\{
        \ct{\bexpr_j}{\tilde{\bexpr}_i}{r_j}{\tilde{v}_i}{e_i + |r_i  - r_j|}{\unstt} \mid
        \ct{\bexpr_i}{\tilde{\bexpr}_i}{r_1}{\tilde{v}_i}{e_i}{\stt}
        \in \Ssem{\FStm_i}{\env}{\I},\\
        &\qquad\qquad
        \ct{\bexpr_j}{\tilde{\bexpr}_j}{r_j}{\tilde{v}_j}{e_j}{\stt}
        \in \Ssem{\FStm_j}{\env}{\I},\, i,j \in \{1,2\},\, i\neq j,  \}
        \propGuard{\neg\FtoRB{\FBExpr}}{\FBExpr}{}
    \\[1ex]
    %
    %
    &\Ssem{\tagif \FBExpr_1 \tagthen \FStm_1\
    [\tagelsif \FBExpr_i \tagthen \FStm_i]_{i=2}^{n-1} \tagelse \FStm_{n}}{\env}{\I} \dfn\\
    &\quad \lubC{}{}_{i=1}^{n-1} \propGuard{\FBExpr_i \wedge \bigwedge_{j=1}^{i-1}\neg\FBExpr_j}
    {\FtoR{\FBExpr_i} \wedge \bigwedge_{j=1}^{i-1}\neg\FtoR{\FBExpr_j}}
    {\Ssem{\FStm_i}{\env}{\I}}\\
    &\quad\lubC{{}}{{}}
    \propGuard{\bigwedge_{j=1}^{n-1}\neg\FBExpr_j}
    {\bigwedge_{j=1}^{n-1}\neg\FtoR{\FBExpr_j}}
    {\Ssem{\FStm_n}{\env}{\I}} \lubC{{}}{{}}\\
    &\quad {}\lubC{}{} \{
        \ct{\eta_i}{\tilde{\eta}_j}{r_i}{\tilde{v}_j}{e_j + |r_i - r_j|}{\unstt} \mid
        i,j \in \{1,\dots, n-1\}, i \neq j,\
        \ct{\eta_i}{\tilde{\eta}_i}{r_i}{\tilde{v}_i}{e_i}{\stt}
        \in \Ssem{\FStm_i}{\env}{\I},\\
        &\qquad\qquad\ct{\eta_j}{\tilde{\eta}_j}{r_j}{\tilde{v}_j}{e_j}{\stt}
        \in \Ssem{\FStm_j}{\env}{\I}\}
        \propGuard{\FBExpr_j \wedge \bigwedge_{k=1}^{j-1}\neg\FBExpr_k}
                  {\FtoR{\FBExpr_i} \wedge \bigwedge_{k=1}^{i-1}\neg\FtoR{\FBExpr_k}}{}    
    \lubC{{}}{{}}\\
    &\quad \lubC{}{} \{
        \ct{\eta_i}{\tilde{\eta}_n}{r_i}{\tilde{v}_n}{e_n + |r_i - r_n|}{\unstt} \mid
        i \in \{1,\dots, n-1\},\
        \ct{\eta_i}{\tilde{\eta}_i}{r_i}{\tilde{v}_i}{e_i}{\stt}
        \in \Ssem{\FStm_i}{\env}{\I},\\
        &\qquad\qquad\ct{\eta_n}{\tilde{\eta}_n}{r_n}{\tilde{v}_n}{e_n}{\stt}
        \in \Ssem{\FStm_n}{\env}{\I}\}
        \propGuard{\bigwedge_{k=1}^{n-1}\neg\FBExpr_k}
                  {\FtoR{\FBExpr_i} \wedge \bigwedge_{k=1}^{i-1}\neg\FtoR{\FBExpr_k}}{}
    \lubC{{}}{{}}\\
    &\quad \lubC{}{} \{
        \ct{\eta_n}{\tilde{\eta}_i}{r_n}{\tilde{v}_i}{e_i + |r_i - r_n|}{\unstt} \mid
        i \in \{1,\dots, n-1\},\
        \ct{\eta_i}{\tilde{\eta}_i}{r_i}{\tilde{v}_i}{e_i}{\stt}
        \in \Ssem{\FStm_i}{\env}{\I},\\
        &\qquad\qquad \ct{\eta_n}{\tilde{\eta}_n}{r_n}{\tilde{v}_n}{e_n}{\stt}
        \in \Ssem{\FStm_n}{\env}{\I}\}
        \propGuard{\FBExpr_i \wedge \bigwedge_{k=1}^{i-1}\neg\FBExpr_k}
                  {\bigwedge_{k=1}^{n-1}\neg\FtoR{\FBExpr_k}}{}
    \\[1ex]
    %
    %
    &\Ssem{\forite{i_{0}}{i_{n}}{\acc_{0}}{\tilde{g}}}{\env}{\I} \dfn
    \Ssem{\tilde{g}(i_n,(\tilde{g}(i_n-1,(\dots\tilde{g}(i_0+1,\tilde{g}(i_0, \acc_0))\dots))}{\env}{\I}
    \\[1ex]
    %
    %
    &\Ssem{\fpfun(\FAExpr_{i})_{i=1}^{n}}{\env}{\I} \dfn \lubC{}{} \{
      \ct{\bexpr' \wedge {\textstyle\bigwedge_{i=1}^{n}} \bexpr_i}
           {\tilde\bexpr' \wedge {\textstyle\bigwedge_{i=1}^{n}} \tilde\bexpr_i}{r'}{\tilde{v}'}{e'}{t}
        \mid\\
        &\quad \ct{\bexpr}{\tilde\bexpr}{r}{\tilde{v}}{e}{t} \in \I(\fpfun(\fvar_i)_{i=1}^{n}),
        \forall 1 \le i \le n \colon \ct{\bexpr_i}{\tilde\bexpr_i}{r_i}{\tilde{v}_i}{e_i}{t_i}
        \in \Ssem{\FAExpr_{i}}{\env}{\I},\\ 
        &\quad
        \ r' = r[\real{\fvar_i}\leftarrow r_i]_{i=1}^{n},\ \tilde{v}' = \tilde{v}[\fvar_i\leftarrow\tilde{v}_i]_{i=1}^{n},\
         e' = e[\err{\fvar_i} \leftarrow e_i]_{i=1}^{n},\\
		&\quad
        \bexpr' = \bexpr[\real{\fvar_i}\leftarrow r_i,\err{\fvar_i}\leftarrow e_i,\fvar_i \leftarrow \tilde{v}_i]_{i=1}^{n},
        \tilde\bexpr' = \tilde\bexpr[\real{\fvar_i} \leftarrow r_i,
        \err{\fvar_i} \leftarrow e_i,\fvar_i \leftarrow \tilde{v}_i]_{i=1}^{n},\\
        &\quad
        \bexpr' \wedge {\textstyle\bigwedge_{i=1}^{n}} \bexpr_i \wedge
        \tilde\bexpr' \wedge {\textstyle\bigwedge_{i=1}^{n}} \tilde\bexpr_i \nleqB \false\}
\end{align*}
}
\caption{Semantics of a program expression.}
\label{fig:sem}
\end{figure}

A \emph{real-valued program} (or, simply, a {\em real} program) has the same structure of a floating-point program where floating-point expressions are replaced with real number ones. A real-valued program does not contain any $\stabWarning$ statements.
The set of real-valued programs is denoted as $\Prog$.
The function $\RtoFProg{} : \Prog \rightarrow \FProg$ converts a real program $P$ into a floating-point one by applying, \resp{}, $\RtoFB{}$ and $\RtoFA{}$ to Boolean and arithmetic expressions occurring in the function declarations in $P$.
Conversely, $\FtoRProg{} : \FProg \rightarrow \Prog$ returns the real-number counterpart of a floating-point program.
For every floating-point program $\fpprog\in\FProg$, it holds that $\fpprog = \RtoFProg{\FtoRProg{\tilde{P}}}$.

The presented semantics correctly models the difference between the floating-point program $\fpprog$ and its real number counterpart $\FtoRProg{\tilde{P}}$ as stated in the following theorem.
\begin{theorem}
\label{th:ro-error}
    Let $\fpprog \in \FProg$ be a floating-point program. 
    For every function symbol $\fpfun(\tilde{x}_1,\dots,\tilde{x}_n)$ defined in $\fpprog$,
    let $f(x_1,\dots,x_n)$ be its real-valued counterpart defined in $\FtoRProg{\fpprog}$ such that for all
    $i\in\{1,\dots,n\}$, $x_i = \real{\tilde{x}_i}$.
    It holds that
    $$| \fpfun(\tilde{x}_1,\dots,\tilde{x}_n) - f(x_1,\dots,x_n) | \leq \mathit{err}_{\tilde{f}}$$
    where
    $\mathit{err}_{\tilde{f}} = \mathit{max}\{ e \mid
    \ct{\rcond}{\fpcond}{\rres}{\fpres}{e}{t} \in \Fsem{\fpprog}(\fpfun)\}$.
    The expression $\mathit{err}_{\tilde{f}}$ is called the \emph{overall error} of the function $f$.
\end{theorem}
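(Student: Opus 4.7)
\medskip

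\noindent\textbf{Proof plan.}
The plan is to first establish, by structural induction on program expressions, the following pointwise soundness invariant about $\Ssem{\cdot}{}{}$: for every $\fstm\in\FStmDom$, environment $\env$, and interpretation $\I$ that is itself sound (in the sense defined below) for the function symbols appearing in $\fstm$, and for every pair of assignments $(\sigma,\tilde\sigma)$ with $\sigma(\real{\tilde x}) = \FPtoR{\tilde\sigma(\tilde x)}$ on the free variables of $\fstm$, there exists exactly one conditional error bound $\ct{\rcond}{\fpcond}{\rres}{\fpres}{e}{t}\in\Ssem{\fstm}{\env}{\I}$ whose conditions are simultaneously satisfied by $(\sigma,\tilde\sigma)$; moreover, for such a conditional error bound, $\sigma$ evaluates the real-valued counterpart of $\fstm$ to $\rres$, $\tilde\sigma$ evaluates $\fstm$ in floating-point to $\fpres$, and $|\rres-\fpres|\leq e$ (both sides interpreted in $\R$ via $\FPtoR{}$). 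Soundness of $\I$ means that the same invariant holds at every $\I(\fpfun(\fvar_1,\dots,\fvar_n))$ with respect to the real-valued counterpart of $\fpfun$.

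The inductive cases follow the rules in Figure~\ref{fig:sem}. Constants, variables, and the $\stabWarning$ case are immediate from the definitions. For $\fpop(\FAExpr_i)_{i=1}^{n}$, the bound is precisely equation~(\ref{eq:approx_err}) composed with the inductive hypotheses on the subexpressions. The let-in case is a direct rewrite using the extended environment. For binary and $n$-ary conditionals, I would distinguish whether the real guard and the floating-point guard agree: when they agree, the stable $\propGuard{}{}{}$-propagated branch supplies the bound through the inductive hypothesis on $\FStm_i$; when they disagree, exactly one of the explicitly constructed $\unstt$-marked conditional error bounds applies, and the added term $|r_i - r_j|$ together with $e_j$ bounds $|\rres-\fpres|$ by the triangle inequality, since $\rres = r_i$ comes from the real branch taken, $\fpres = \tilde v_j$ from the floating-point branch taken, and $|r_j - \tilde v_j|\leq e_j$ by the stable subcase. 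The for-loop case reduces to the function-call case after the syntactic unfolding indicated in the semantics. The function-call case follows from the assumed soundness of $\I$ after applying the substitution $[\real{\fvar_i}\leftarrow r_i,\err{\fvar_i}\leftarrow e_i,\fvar_i\leftarrow\tilde v_i]$ that links formal and actual parameters.

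Having proven the invariant for $\Ssem{\cdot}{}{}$ parametrically in a sound $\I$, I lift it to $\Fsem{\fpprog}$ by fixpoint induction on $\Dsem{\fpprog}{\cdot}$. The base interpretation $\botI$ is vacuously sound. The well-formedness assumption on programs (each called function is declared earlier and the only recursion is the bounded one introduced by $\tagfor$) guarantees that the iteration of $\Dsem{\fpprog}{\cdot}$ stabilizes in a finite number of steps, as already noted after equation~(\ref{eq:Dsem}), so the soundness invariant is preserved into the least fixed point. Finally, given the hypotheses of the theorem, any concrete pair of assignments to $(\tilde x_1,\dots,\tilde x_n)$ and $(x_1,\dots,x_n)=(\real{\tilde x_1},\dots,\real{\tilde x_n})$ selects a single $\ct{\rcond}{\fpcond}{\rres}{\fpres}{e}{t}\in\Fsem{\fpprog}(\fpfun)$, whose error $e$ is bounded by the maximum $\mathit{err}_{\tilde f}$ over the entire set, yielding the stated inequality.

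The main obstacle I anticipate is the unstable case of the $n$-ary conditional: one has to verify that the four disjoint families of $\unstt$-marked conditional error bounds together with the stable ones exhaustively cover every real/floating-point guard-outcome combination that is actually reachable, and that exactly one of them matches any given $(\sigma,\tilde\sigma)$. This bookkeeping is what makes the unicity part of the invariant delicate and is where the soundness of the $|r_i - r_j|$ compensation term must be carefully justified, whereas the purely quantitative bounds for arithmetic operators are routine consequences of~(\ref{eq:approx_err}).
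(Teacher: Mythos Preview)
Your plan is essentially the paper's own proof: structural induction on program expressions, with the arithmetic case discharged by equation~(\ref{eq:approx_err}) and the conditional case split into stable and unstable subcases, the latter bounded via the triangle inequality using the added $|r_i-r_j|$ term; your explicit treatment of function calls through a ``sound interpretation'' hypothesis and fixpoint induction simply spells out what the paper's sketch leaves implicit. One remark: your invariant asks for \emph{exactly one} matching conditional error bound per $(\sigma,\tilde\sigma)$, but the theorem only needs existence of some matching bound with $|\rres-\fpres|\leq e$, since $\mathit{err}_{\tilde f}$ is the maximum over the whole set---dropping uniqueness spares you precisely the delicate exhaustiveness bookkeeping you flag for the $n$-ary conditional without weakening the conclusion.
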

\begin{proof}[Proof Sketch.]
    Given $\fpprog \in \FProg$, for each declaration $\fpfun(\tilde{x}_1,\dots,\tilde{x}_n) = \FStm$ occurring in
    $\fpprog$, it exists a declaration  $f(x_1,\dots,x_n) = S$ in $\FtoRProg{\FProg}$.
    Thus, $| \fpfun(\tilde{x}_1,\dots,\tilde{x}_n) - f(x_1,\dots,x_n) | \leq \mathit{err}_{\tilde{f}}$ holds
    if and only if 
    $|\FStm - S| \leq \mathit{err}_{\tilde{f}}$ holds.
    The proof proceeds by structural induction on the structure of the program expression $\FStm$.
    The main cases are the arithmetic expression and the conditional.
    
    Given an arithmetic expression $\FStm = \fpop(\FAExpr_{i})_{i=1}^{n}$,
    from Formula~\eqref{eq:approx_err}, it follows that the error expression $\ebound{\fpop}{\rv{i},\ev{i}}_{i=1}^n$
    associated
    to $\FStm$ is a correct over-approximation of the round-off error, therefore 
    $|\FStm - S| \leq \ebound{\fpop}{\rv{i},\ev{i}}_{i=1}^n$, where $S = \FtoRA{\fpop(\FAExpr_{i})_{i=1}^{n}}$.
    
    Let $\env\in\Env$, $\I\in\Interp$, and assume $\FStm = \ite{\tilde{\phi}}{A}{B}$.
    By structural induction and by \smartref{def:prop}, it follows that
    \begin{align*}
        &\mathit{err}_A \dfn \mathit{max}\{ e \mid \ct{\rcond_A}{\fpcond_A}{\rres_A}{\fpres_A}{e_A}{t_A}
        \in \propGuard{\FtoRB{\tilde{\phi}}}{\tilde{\phi}}{\Ssem{A}{\env}{\I}}\}\\
        &\mathit{err}_B \dfn \mathit{max}\{ e \mid \ct{\rcond_B}{\fpcond_B}{\rres_B}{\fpres_B}{e_B}{t_B}
        \in \propGuard{\FtoRB{\neg\tilde{\phi}}}{\neg\tilde{\phi}}{\Ssem{B}{\env}{\I}}\}
    \end{align*}
    In addition, given
    $\ct{\rcond_A}{\fpcond_A}{\rres_A}{\fpres_A}{e_A}{t_A} \in \Ssem{A}{\env}{\I}$ and
    $\ct{\rcond_B}{\fpcond_B}{\rres_B}{\fpres_B}{e_B}{t_B} \in \Ssem{B}{\env}{\I}$, 
    the error of taking an unstable path is defined as the difference between the real and the
    floating-point results, which is bounded by the following value
    $\mathit{err}_{\unstt} = \mathit{max}(e_A + |r_A - r_B|,e_B + |r_B - r_A|)$.
    Let $e = \mathit{max}\{ e \mid  \ct{\rcond}{\fpcond}{\rres}{\fpres}{e}{t} \in
    \Ssem{\ite{\tilde{\phi}}{A}{B}}{\env}{\I}\}$,
    by Definition of $\Ssem{}{}{}$ (in \smartref{fig:sem}), it follows that
    $e = \mathit{max}(\mathit{err}_A,\mathit{err}_B,\mathit{err}_{\unstt})$,
    thus $|\FStm - S| \leq e$.
\end{proof}

The soundness of the error expressions computed by the semantics is
formally proven in PVS.\footnote{These proofs are part of the \PVS{}
development available from \url{https://shemesh.larc.nasa.gov/fm/PRECiSA}.}

\section{A Program Transformation to Detect Unstable Tests}
\label{sec:transformation}

This section presents a program transformation that instruments a
floating-point program to detect unstable tests.
The result of this transformation is a floating-point program that is guaranteed to return either the result of the original program when it can be assured that both its real and its floating-point flows agree or a warning when these flows may diverge.
This program transformation extends and improves the one defined in \cite{TitoloMFM18} by providing support for function calls and for-loops, and by adding mechanisms to detect the test instability with better accuracy.
In addition, the program transformation presented here provides
supports for programs with symbolic parameters. These parameters can
be instantiated with concrete value ranges.
%

The input of the transformation is a real-valued program $P$.
The straightforward floating-point implementation of $P$ is initially computed as $\fpProg := \RtoFProg{P}$.
Subsequently, $\fpProg$ is instrumented to detect unstable tests
and return a corrected value. The Boolean expressions in the guards of $\fpProg$ are replaced with more restrictive ones by taking into consideration the symbolic round-off error.
This is done by means of two Boolean abstractions $\betaPos{}, \betaNeg{}: \FBExprDom \rightarrow \FBExprDom$ defined as follows for conjunctions and disjunction of sign tests.
\begin{definition}
    \label{def:beta}
Let $\errVar{} : \FAExprDom \rightarrow \FVar$ be a function that associate to an arithmetic expression $\vfpaexpr\in\FAExprDom$ a variable that represents its accumulated round-off error, \ie{} $|\vfpaexpr - \FtoRA{\vfpaexpr}| \leq \errVar{\vfpaexpr}$.
The functions $\betaPos{}, \betaNeg{}: \FBExprDom \rightarrow \FBExprDom$ are defined as follows.
    %
\begin{align*}
    & \betaPos{\vfpaexpr \leq 0} \dfn \vfpaexpr \leq -\errVar{\vfpaexpr}
    &&\betaNeg{\vfpaexpr \leq 0} \dfn \vfpaexpr >     \errVar{\vfpaexpr}\\
    & \betaPos{\vfpaexpr \geq 0} \dfn \vfpaexpr \geq  \errVar{\vfpaexpr}
    &&\betaNeg{\vfpaexpr \geq 0} \dfn \vfpaexpr <    -\errVar{\vfpaexpr}\\
    & \betaPos{\vfpaexpr <    0} \dfn \vfpaexpr <    -\errVar{\vfpaexpr}
    &&\betaNeg{\vfpaexpr <    0} \dfn \vfpaexpr \geq  \errVar{\vfpaexpr}\\
    & \betaPos{\vfpaexpr >    0} \dfn \vfpaexpr >     \errVar{\vfpaexpr}
    &&\betaNeg{\vfpaexpr >    0} \dfn \vfpaexpr \leq -\errVar{\vfpaexpr}\\
    & \betaPos{\rbool_1 \wedge \rbool_2} \dfn \betaPos{\rbool_1} \wedge \betaPos{\rbool_2}
    &&\betaNeg{\rbool_1 \wedge \rbool_2} \dfn \betaNeg{\rbool_1} \vee   \betaNeg{\rbool_2}\\
    & \betaPos{\rbool_1 \vee   \rbool_2} \dfn \betaPos{\rbool_1} \vee   \betaPos{\rbool_2}
    &&\betaNeg{\rbool_1 \vee   \rbool_2} \dfn \betaNeg{\rbool_1} \wedge \betaNeg{\rbool_2}\\
    & \betaPos{\neg\rbool} \dfn \betaNeg{\rbool}
    &&\betaNeg{\neg\rbool} \dfn \betaPos{\rbool}
\end{align*}
In addition, let $\errVarBeta{} : \FBExprDom \ra \wp(\FVar)$ denote the function computing the error variables introduced by applying $\betaPos{}$ and $\betaNeg{}$ to a Boolean expression.
Given $\phi, \phi_1, \phi_2 \in\FBExprDom$,
$\errVarBeta{\vfpaexpr \diamond 0} \dfn \{\errVar{\vfpaexpr}\}$, where
$\diamond \in \{\geq, \leq, >, <\}$ and
$\errVarBeta{\phi_1 \square \phi_2} \dfn \errVarBeta{\phi_1} \cup \errVarBeta{\phi_2}$, where $\square \in \{\wedge, \vee\}$,
and $\errVarBeta{\neg \phi} \dfn \errVarBeta{\phi}$.
\end{definition}

Generic inequalities of the form $a<b$ are handled by replacing them with their equivalent sign-test form $a-b<0$.
The following lemma states that $\betaPos{}$ and $\betaNeg{}$ correctly approximate a floating-point Boolean expression and its negation, \resp{}.
It has been proven correct in \PVS{}.\footnote{This proof is available at \url{https://shemesh.larc.nasa.gov/fm/PRECiSA}.}
\begin{lemma}
    \label{lem:beta}
Given $\vfpaexpr\in\FAExpr$, let $\fv{\vfpaexpr}$ be the set of free variables in $\vfpaexpr$.
For all $\sigma: \fv{\rbool} \rightarrow \R$, $\tilde{\sigma}: \fv{\fpbool} \rightarrow \Fp$, and $\fvar\in\fv{\fpbool}$ such that $\RtoF{\sigma(\rvar)} = \tilde{\sigma}(\float{\rvar})$, $\betaPos{}$ and $\betaNeg{}$ satisfy the following properties. 
\begin{enumerate}
    \item\label{pt:pos_prop} $\evalFBExpr{\tilde{\sigma}}{\betaPos{\fpbool}} \Rightarrow
     \evalFBExpr{\tilde{\sigma}}{\fpbool} \wedge \evalBExpr{\sigma}{\rbool}$.
    \item\label{pt:neg_prop} $\evalFBExpr{\tilde{\sigma}}{\betaNeg{\fpbool}} \Rightarrow
    \evalFBExpr{\tilde{\sigma}}{\neg\fpbool} \wedge \evalBExpr{\sigma}{\neg\rbool}$.
\end{enumerate}
\end{lemma}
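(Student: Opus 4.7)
The plan is to prove both statements (\ref{pt:pos_prop}) and (\ref{pt:neg_prop}) simultaneously by structural induction on the floating-point Boolean expression $\fpbool$. Simultaneous induction is essential because the definitions of $\betaPos{}$ and $\betaNeg{}$ are mutually recursive through the negation case: $\betaPos{\neg\rbool}$ is defined as $\betaNeg{\rbool}$ and vice versa, so a one-sided induction cannot be closed.

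For the base cases, the four sign-test forms $\vfpaexpr \diamond 0$ with $\diamond \in \{\le, \ge, <, >\}$ must each be handled for both $\betaPos{}$ and $\betaNeg{}$. Take for instance $\fpbool = (\vfpaexpr \leq 0)$ and assume $\evalFBExpr{\tilde{\sigma}}{\betaPos{\fpbool}}$ holds, i.e., $\evalFAExpr{\tilde{\sigma}}{\vfpaexpr} \leq -\tilde\sigma(\errVar{\vfpaexpr})$. Since error variables are non-negative, this immediately yields $\evalFAExpr{\tilde{\sigma}}{\vfpaexpr} \leq 0$, proving the floating-point conjunct. For the real conjunct, the defining property of $\errVar{}$ guarantees that
\[
|\evalFAExpr{\tilde{\sigma}}{\vfpaexpr} - \evalAExpr{\sigma}{\FtoRA{\vfpaexpr}}| \leq \tilde\sigma(\errVar{\vfpaexpr}),
\]
so $\evalAExpr{\sigma}{\FtoRA{\vfpaexpr}} \leq \evalFAExpr{\tilde{\sigma}}{\vfpaexpr} + \tilde\sigma(\errVar{\vfpaexpr}) \leq 0$, which is precisely $\evalBExpr{\sigma}{\rbool}$. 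The remaining seven base cases are analogous, each time exploiting the fact that the $\pm\errVar{\vfpaexpr}$ offset in the abstracted inequality is exactly the slack needed to absorb the round-off error and simultaneously ensure strict witnesses for both the floating-point and the real interpretations.

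For the inductive step, the conjunction and disjunction cases for $\betaPos{}$ follow directly from the induction hypotheses for the subexpressions and the standard logical rules (conjunction distributes over implications, and disjunction on both sides is preserved by case analysis). The dual cases for $\betaNeg{}$ work analogously using De Morgan-style dualities built into the definition: $\betaNeg{\rbool_1 \wedge \rbool_2} = \betaNeg{\rbool_1} \vee \betaNeg{\rbool_2}$ and $\betaNeg{\rbool_1 \vee \rbool_2} = \betaNeg{\rbool_1} \wedge \betaNeg{\rbool_2}$ are exactly what is needed to transport the two negated conjuncts through the induction. The negation case is dispatched by appealing to the opposite part of the simultaneous induction hypothesis: $\betaPos{\neg\rbool}$ reduces to $\betaNeg{\rbool}$, whose induction hypothesis gives $\evalFBExpr{\tilde{\sigma}}{\neg\fpbool} \wedge \evalBExpr{\sigma}{\neg\rbool}$, which is exactly what (\ref{pt:pos_prop}) demands for $\neg\rbool$; the symmetric argument handles $\betaNeg{\neg\rbool}$.

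The principal obstacle, beyond keeping the simultaneous induction well-organized, is verifying that the constants placed in the $\betaPos{}/\betaNeg{}$ definitions for strict versus non-strict inequalities are precisely tight enough to carry both implications at once. In particular, cases such as $\betaPos{\vfpaexpr < 0} \dfn \vfpaexpr < -\errVar{\vfpaexpr}$ and $\betaNeg{\vfpaexpr < 0} \dfn \vfpaexpr \geq \errVar{\vfpaexpr}$ must be checked to confirm that the strictness of the inequalities is preserved under the $\pm\errVar{\vfpaexpr}$ shift in both the floating-point and the real semantics; a careful case split on whether the error bound is strictly positive or zero suffices, since when $\errVar{\vfpaexpr} = 0$ the floating-point value coincides with its real interpretation and the abstraction collapses to the original test.
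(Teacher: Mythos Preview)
Your proposal is correct and follows the natural structural-induction argument. The paper itself does not give a written proof of this lemma; it simply states that the result ``has been proven correct in \PVS{}'' and points to the mechanized development. Your simultaneous induction on the structure of $\fpbool$, with the base cases discharged via the defining inequality $|\vfpaexpr - \FtoRA{\vfpaexpr}| \leq \errVar{\vfpaexpr}$ and non-negativity of $\errVar{\vfpaexpr}$, is exactly the argument any such mechanized proof would unfold to, so there is nothing substantive to contrast.

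Two minor remarks. First, the case split you mention at the end (on whether $\errVar{\vfpaexpr}=0$) is unnecessary: for instance, from $\vfpaexpr < -\errVar{\vfpaexpr}$ one gets $\vfpaexpr + \errVar{\vfpaexpr} < 0$, and combining with $\FtoRA{\vfpaexpr} \leq \vfpaexpr + \errVar{\vfpaexpr}$ yields the strict real inequality directly, regardless of whether the error bound vanishes. Second, be careful with notation if you write this up: the paper does not define evaluation functions for arithmetic expressions (only $\evalBExpr{\cdot}{\cdot}$ and $\evalFBExpr{\cdot}{\cdot}$ for Boolean ones), so you should either introduce them explicitly or phrase the base-case computations in terms of the evaluated values rather than new macros.
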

Property~\ref{pt:pos_prop} states that for all floating-point Boolean expressions $\fpbool$, $\betaPos{\fpbool}$ implies both $\fpbool$ and its real-valued counterpart.
Symmetrically, Property~\ref{pt:neg_prop} ensures that $\betaNeg{\fpbool}$ implies both the negation of $\fpbool$ and the negation of its real-valued counterpart.

The function $\tauProgDecl{}$ transforms a real-valued program $P$
into a floating-point program that detects and avoids unstable tests. It is defined as follows.
\begin{definition}[Program Transformation]
    \label{def:prog_trans}
    Let $P \in\Prog$ be a real-valued
    program, the transformation $\tauProgDecl{}:\Prog \rightarrow \FProg$ is defined as
    \begin{equation}
      \begin{split}
        \tauProgDecl{P} = \bigcup \{&
        \fpfun^{\tau}(\fvar_1,\ldots,\fvar_n,\errvar_1,\dots,\errvar_k) = \FStm' \mid\\&
        \fpfun(\fvar_1,\ldots,\fvar_n) = \FStm \in \RtoFProg{P},
        \pair{\FStm'}{\{\errvar_1,\dots,\errvar_k\}} = \tauProg{\FStm}
        \}.
        \end{split}
    \end{equation}
    The function
    $\tauProg{} : \FStmDom \rightarrow \FStmDom\times\wp(\FVar)$ is
    defined as follows,
    where $\tauStm{} :\FStmDom \rightarrow \FStmDom$ and
    $\tauVar{} :\FStmDom \rightarrow \wp(\FVar)$ return the first and
    the second projection of a pair in $\FStmDom\times\wp(\FVar)$,
    respectively.

    \begin{align*}
        &\tauProg{\fpcnst} = \pair{\fpcnst}{\emptyset} \qquad \tauProg{\fvar} = \pair{\fvar}{\emptyset}\\
        &\tauProg{\fpop(\fpaexpr_{i})_{i=1}^{n}} =
            \langle
            \begin{aligned}[t]
            &\tagif {\textstyle \bigvee_{i=1}^{n}} (\tauStm{\fpaexpr_i} = \stabWarning) \tagthen \stabWarning
            \tagelse \fpop(\tauStm{\fpaexpr_i})_{i=1}^{n},
            \emptyset \rangle
            \end{aligned}\\
        %
        &\tauProg{\ite{\fpbool}{\fstm_{1}}{\fstm_{2}}} =\\
        &\qquad\begin{cases}
                \begin{aligned}[c]
                &\langle
                \ite{\fpbool}{\tauStm{\fstm_{1}}}{\tauStm{\fstm_{2}}},\\
                &\quad\tauVar{\fstm_{1}}\cup\tauVar{\fstm_{2}}
                \rangle      
                \end{aligned} 
                & \text{if $\fpbool = \betaPos{\fpbool}$ and $\neg\fpbool = \betaNeg{\fpbool}$}\\[3ex]
                \begin{aligned}[c]
                    \langle
                    &\tagif {\textstyle \bigvee_{j=1}^{k}} (\tauStm{\fpaexpr_j} = \stabWarning) \tagthen \stabWarning\\
                    &\tagelsif \betaPos{\fpbool} \tagthen \tauStm{\fstm_{1}}\\
                    &\tagelsif \betaNeg{\fpbool} \tagthen \tauStm{\fstm_{2}}\\
                    &\tagelse {\stabWarning},\\
                    &\quad\tauVar{\fstm_{1}}\cup\tauVar{\fstm_{2}}\cup\errVarBeta{\fpbool}
                    \rangle
                \end{aligned}
                & \text {if $\fpbool \neq \betaPos{\fpbool}$ or $\neg\fpbool \neq \betaNeg{\fpbool}$} 
            \end{cases}\\
        &\text{where $\fpaexpr_1,\dots,\fpaexpr_k$ are the arithmetic expressions occurring in $\fpbool$.}\\[1.5ex]
        &\tauProg{\tagif \fpbool_1 \tagthen \FStm_1\
       [\tagelsif \fpbool_i \tagthen \fstm_i]_{i=2}^{n-1} \tagelse \fstm_{n}} =\\
       &\qquad\begin{cases}
        \begin{aligned}[c]
           \langle
           &\tagif \fpbool_1 \tagthen \tauStm{\FStm_1}\\
           &[\tagelsif \fpbool_i \tagthen \tauStm{\FStm_i}]_{i=2}^{n-1} \tagelse \tauStm{\FStm_{n}},\\
           &\quad{\bigcup_{i=1}^n} \tauVar{S_i}\rangle
        \end{aligned}
        &
        \text{$\begin{aligned}
         \text{if}\, \forall 1\leq i\leq n,\,
         \fpbool_i = \betaPos{\fpbool_i}\\
         \text{and}\, \neg\fpbool_i = \betaNeg{\fpbool_i}\end{aligned}$}\\[6ex]
        \begin{aligned}[c]
           \langle&
           \tagif {\textstyle \bigvee_{j=1}^{k}} (\tauStm{\fpaexpr_j} = \stabWarning) \tagthen \stabWarning\\
           &\tagelsif \betaPos{\fpbool_1} \tagthen \tauStm{\FStm_1}\\
           &[\tagelsif \betaPos{\fpbool_i} \wedge
            {\textstyle \bigwedge_{j = 1}^{i-1}} \betaNeg{\fpbool_j}
            \tagthen \tauStm{\FStm_i}]_{i=2}^{n-1}\\
           &\tagelsif {\textstyle \bigwedge_{j = 1}^{n-1}} \betaNeg{\fpbool_j}
            \tagthen \tauStm{\FStm_{n}}\\
           &\tagelse \stabWarning,\\
           &\quad{\bigcup_{i=1}^n} (\tauVar{\FStm_i} \cup \errVarBeta{\fpbool_i})\rangle
        \end{aligned}
        &
        \text{otherwise}
       \end{cases}\\
        &\text{where $\fpaexpr_1,\dots,\fpaexpr_k$ are the arithmetic expressions occurring in
         $\fpbool_1,\dots,\fpbool_n$.}\\[1.5ex]
       &\tauProg{\letStm{\fvar}{\fpaexpr}{\fstm}} =
                   \pair{\tagif \tauStm{\fpaexpr} = \stabWarning \tagthen \stabWarning \tagelse\, 
                   \letStm{\fvar}{\tauStm{\fpaexpr}}{\tauStm{\fstm}}}{\tauVar{\fstm}}\\[1.5ex]
       &\tauProg{\forite{i_{0}}{i_{n}}{\acc_{0}}{\lambda(i,\acc). \fstm}} =
                   \pair{\forite{i_{0}}{i_{n}}{\acc_{0}}{\lambda(i,\acc). \tauStm{\fstm}}}{\tauVar{\fstm}}\\[1.5ex]
       &\begin{aligned}
            &\tauProg{\tilde{g}(\fpaexpr_1,\dots,\fpaexpr_n)} = \begin{aligned}[t]
            \langle&\tagif \bigvee_{j=1}^{n}(\tauStm{\fpaexpr_i} = \stabWarning) \tagthen \stabWarning
            \tagelse \tilde{g}^{\tau}(\tauStm{\fpaexpr_1},\dots,\tauStm{\fpaexpr_n},
            \\&\errvar'_1,\dots,\errvar'_{m}),
            \bigcup_{i=1}^{n} \{\errvar'_{i}\} \rangle,
            \end{aligned}\\
            &\text{where
    $\tilde{g}^{\tau}(\tilde{x}_1,\dots,\tilde{x}_n,\errvar_1,\dots,\errvar_{m})\in\tauProgDecl{P}$}
            \\&\text{ and for all }i=1 \dots m, \mbox{ if } \errvar_i = \errVar{\vfpaexpr_i},
             \mbox{ then }\\&\errvar'_i = \errVar{\vfpaexpr_i[\tilde{x}_j \leftarrow \tauStm{\fpaexpr_j}]_{j=1}^n}.
       \end{aligned}
    \end{align*}
\end{definition}
The transformation defined here applies the Boolean approximation
functions $\betaPos{}$ and $\betaNeg{}$ to the guards in the
conditionals and  adds new arguments 
to the function  declarations. 
These new arguments represent the round-off error of the arithmetic expressions occurring in the body of each test.  
 For each function declaration, the set of new error variables used in its transformation is built.
This set contains the error variables introduced by the application of $\betaPos{}$ and $\betaNeg{}$, and the error variables used in the function calls in the body of the declaration.
The transformation $\tauProg{}$, defined in~\smartref{def:prog_trans},
proceeds by structural induction as explained below. 
\begin{description}
    \item[Constants and Variables.]
    Variables and constants are untouched by the transformation.
    \item[Arithmetic operator.]
    When an arithmetic operator is applied, it is necessary to check if the returning value of any of the operands is a
    warning $\stabWarning$.
    If this is the case, the warning is propagated to the operation. Otherwise, the result of the operation is returned.

    \item[Binary conditional.]
    When the round-off error does not affect the evaluation of the Boolean expression,
    \ie{} $\fpbool = \betaPos{\fpbool}$ and $\fpbool = \betaNeg{\fpbool}$, the transformation function $\tauProg{}$ is
    recursively
    applied to the subprograms $\FStm_1$ and $\FStm_2$.
    Otherwise, the test on $\fpbool$ is replaced by two more restrictive tests on $\betaPos{\fpbool}$
    and $\betaNeg{\fpbool}$.
    The \emph{then} branch is taken when $\betaPos{\fpbool}$ is satisfied.
    By Property~\ref{pt:pos_prop}, this means that in the original program both $\fpbool$ and $\FtoR{\fpbool}$ hold
    and, thus, the \emph{then} branch is taken in both real and floating-point control flows.
    %
    The \emph{else} branch of the transformed program is taken when $\betaNeg{\fpbool}$ holds.
    This means, by Property~\ref{pt:neg_prop}, that in the original program the else branch is taken in both real and
    floating-point control flows.
    %
    When neither $\betaPos{\fpbool}$ nor $\betaNeg{\fpbool}$ is satisfied or when one of the arithmetic expressions
    occurring in $\fpbool$ is evaluated to $\stabWarning$, a warning $\stabWarning$ is issued
    indicating that floating-point and real flows may diverge.
    The function $\errVarBeta{}$ is applied to $\fpbool$ to collect the new error variables introduced by the
    application of $\betaPos{}$ and $\betaNeg{}$.
    \item[N-ary conditional.]
    In the case the round-off error does not affect the evaluation of any of the Boolean expression in the $n$-ary
    conditional, the Boolean guards are untouched and the transformation function $\tauProg{}$ is applied recursively
    to the subprograms $\FStm_1,\dots,\FStm_2$.
    
    Otherwise, the guard $\fpbool_i$ of the $i$-th branch is replaced by the
    conjunction of $\betaPos{\fpbool_i}$ and $\betaNeg{\fpbool_j}$ for all the previous branches $j<i$.
    By properties~\ref{pt:pos_prop} and \ref{pt:neg_prop}, it follows that the transformed program takes the $i$-th
    branch only when the same branch is taken in both real and floating-point control flows of the original program.
    Additionally, a warning is issued by the transformed program when real and floating-point control flows of the
    original program differ or when one of the arithmetic expressions
    occurring in the guards $\fpbool_1,\dots,\fpbool_n$ is evaluated to $\stabWarning$.
    The new variables introduced by the application of $\betaPos{}$ and $\betaNeg{}$ in each branch are collected by means of the $\errVarBeta{}$ function.
    \item[Let-in expression.]
    For a let-in expression, it is necessary to check that the value
    that is assigned to the local variable is different from
    the warning.
    
    \item[For loop.]
    The transformation is applied to the body of the for-loop.
    \item[Function call.]
      When a function $\tilde{g}$ is called, it is necessary to check if the
    returning value is a warning $\stabWarning$.
    In addition, new error variables $\errvar'_1, \dots, \errvar'_m$ are introduced to model the instantiated error
    parameters where the formal parameters $\tilde{x}_1,\dots, \tilde{x}_n$ are replaced by the actual parameters
    $\fpaexpr_1,\dots,\fpaexpr_n$.
    These new variables are added to the set of error variables $\tauVar{\FStm}$.
\end{description}

\begin{example}
    \label{ex:tran}
Consider again a fragment of the DAIDALUS library. 
The real-valued program $\VWCV\in\Prog$ consists of the functions
$\rtcoa$, $\rvwcv$, and $\rvmd$. The function $\rvwcv$
determines if two aircrafts,
whose relative vertical position and velocity are given by $s$
and $v$, respectively, are in loss of vertical well clear. The
function $\rtcoa$ (time to co-altitude) was already
introduced in \smartref{ex:tcoa}. The function $\rvmd{}$ is
called {\em vertical miss distance}. When the aircraft
are vertically converging this function simplifies to 0. Otherwise, the
function simplifies to the current relative altitude. These
simplifications are not taken into account by the transformation
technique. The constants $\ZTHR$ and $\TCOA$ are time and distance
thresholds, respectively, used in the definition of the DAIDALUS well-clear concept.
\begin{align*}
    &\rtcoa(s, v) = \tagif s * v <0
    \tagthen -(s/v) \tagelse 0\\
    &\rvwcv(s, v) =
    \begin{aligned}[t]
    &\tagif \mathit{abs}(s) \leq \ZTHR \tagthen 1\\
    &\tagelsif \rtcoa(s, v) \geq 0 \wedge \rtcoa(s, v) \leq \TCOA
    \tagthen 1\\
    & \tagelse 0\\
    \end{aligned}\\
    &\rvmd(s,v) = \mathit{abs}(s + \rtcoa(s,v) * v)
\end{align*}
The following program $\tauProgDecl{\VWCV}$ is obtained by using the transformation in \smartref{def:prog_trans}.
\begin{align*}
    &\tautcoa(\tilde{s}, \tilde{v}, e) =
        \begin{aligned}[t]
        &\tagif \tilde{s} \tilde{*} \tilde{v} < - e
        \tagthen -(\tilde{s}\tilde{/} \tilde{v})\\
        &\tagelsif \tilde{s} \tilde{*} \tilde{v} \geq e \tagthen 0\\
        &\tagelse \stabWarning\\
        \end{aligned}\\
    &\tauvwcv(\tilde{s}, \tilde{v}, e_1, e_2, e_3) =\\
        &\begin{aligned}[t]
        & \tagif \tautcoa(\tilde{s}, \tilde{v}) = \stabWarning \tagthen \stabWarning\\
        &\tagelsif \mathit{abs}(\tilde{s}) - \ZTHR \leq e_1\\
        & \qquad \tagthen 1\\
        &\tagelsif \mathit{abs}(\tilde{s}) - \ZTHR > e_1 \wedge
            (\tautcoa(\tilde{s}, \tilde{v}) \geq e_2 \vee \tautcoa(\tilde{s}, \tilde{v}) - \TCOA \leq e_3)\\
        & \qquad \tagthen 1\\
        & \tagelsif \mathit{abs}(\tilde{s}) - \ZTHR > e_1
        \wedge \tautcoa(\tilde{s}, \tilde{v}) < - e_2
        \wedge \tautcoa(\tilde{s}, \tilde{v}) - \TCOA < - e_3\\
        & \qquad \tagthen 0\\
        & \tagelse \stabWarning
        \end{aligned}\\
    &\tauvmd(\tilde{s}, \tilde{v}, e) =
        \begin{aligned}[t]
        \tagif \tautcoa(\tilde{s}, \tilde{v}) = \stabWarning \tagthen \stabWarning
        \tagelse \mathit{abs}(\tilde{s} \tilde{+} \tautcoa(\tilde{s}, \tilde{v}, e) \tilde{*} \tilde{v})
        \end{aligned}
\end{align*}
All test inequalities occurring in $\VWCV$ have been rearranged to be in the form of a sign test.
The floating-point parameters are the rounding of the real ones, \ie{} $s=\real{\tilde{s}}$ and $v=\real{\tilde{v}}$.
The error variable $e$ is introduced as a parameter in $\tautcoa$ to model an over-approximation of the round-off error of $\tilde{s} \tilde{*} \tilde{v}$.
The same error variable has to be added as a parameter to the function $\tauvmd$ that calls $\tautcoa$.
The function $\tauvwcv$ has three additional parameters $e_1$, $e_2$, and $e_3$ modeling the round-off errors of 
$\mathit{abs}(\tilde{s}) - \ZTHR$, $\tautcoa(\tilde{s}, \tilde{v})$, and $\tautcoa(\tilde{s}, \tilde{v}) - \TCOA$, \resp{}.
A check on the return value of the function $\tautcoa(\tilde{s}, \tilde{v})$ is performed to ensure it is not a warning in both $\tauvwcv$ and $\tauvmd$.
\end{example}

The following lemma states the correctness of the program transformation $\tauProgDecl{}$.
If the transformed program $\tauProgDecl{\prog}$ returns an output
$\tilde{v}$ different from $\stabWarning$, then the floating-point version of original program $\RtoFProg{\prog}$
follows a stable path and returns the floating-point output $\tilde{v}$.
Furthermore, in the case the original program presents an unstable behavior, the transformed program returns $\stabWarning$.
\begin{lemma}
\label{th:corr_trans}
Let $\prog \in \Prog$ be a real-valued program,
$\fpprog := \RtoFProg{\prog}$ be its floating-point version, and
$\tprog := \tauProgDecl{\prog}$ be the transformed floating-point program.
For each $\fpfun(\fvar_1,\!\ldots,\!\fvar_n)\!=\!\stm \in \widetilde{P}$,
$\sigma: \{\real{\fvar_1} \dots \real{\fvar_n}\} \rightarrow \R$, and
$\tilde{\sigma}: \{\fvar_1 \dots \fvar_n\} \rightarrow \Fp$,
such that for all $i\in\{1,\dots,n\}$,
$\FtoR{\tilde{\sigma}(\fvar_i)} = \sigma(\fvar_i)$:
\begin{enumerate}
   \item for all $\ct{\rcond'}{\fpcond'}{\rres'}{\fpres'}{e'}{t'}
   \in \Fsem{\tprog}(\fpfun)$
   such that $\fpres'\neq\botUnstt$,
   there exists $\ct{\rcond}{\fpcond}{\rres}{\fpres}{e}{\stt}
   \in \Fsem{\fpprog}(\fpfun)$ such that
   $\evalFBExpr{\tilde{\sigma}}{\fpcond'} \Rightarrow
   \evalBExpr{\sigma}{\rcond} \wedge \evalFBExpr{\tilde{\sigma}}{\fpcond}$
   and $\fpres = \fpres'$;
   \item for all $\ct{\rcond}{\fpcond}{\rres}{\fpres}{e}{\unstt}
   \!\in \Fsem{\fpprog}(\fpfun)$,
   there exists $\ct{\rcond'}{\!\fpcond'}{\botUnstt}{\!\botUnstt}{e'}{t'}
   \!\in \Fsem{\tprog}(\fpfun)$ such that
   $\evalBExpr{\sigma}{\rcond} \wedge \evalFBExpr{\tilde{\sigma}}{\fpcond} \Rightarrow
   \evalFBExpr{\tilde{\sigma}}{\fpcond'}$.
\end{enumerate}
\end{lemma}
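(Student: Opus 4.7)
The plan is to prove Items 1 and 2 simultaneously by structural induction on the body expression $\FStm$, with an outer induction on the well-formed order in which the functions of $\prog$ are declared (this order is acyclic since only for-loop recursion is allowed). Given a declaration $\fpfun(\fvar_1,\dots,\fvar_n) = \FStm$ in $\fpprog$, the outer induction hypothesis supplies the statement for every function called inside $\FStm$, while the inner induction walks through the syntax of $\FStm$ itself. At each inductive case I unfold $\tauProg{\FStm}$ and $\Ssem{\FStm}{\env}{\I}$ side-by-side and match conditional error bounds.

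For the base cases $\fpcnst$ and $\fvar$, the transformation is the identity, so Item 1 holds with the same bound on both sides and Item 2 is vacuous since these leaves produce only stable bounds. For arithmetic operators $\fpop(\FAExpr_i)_{i=1}^n$ and for $\letStm{\fvar}{\FAExpr}{\FStm}$, the guard $\bigvee_i(\tauStm{\fpaexpr_i}=\stabWarning)\tagthen\stabWarning$ of the transformation forces that any non-warning outcome requires every operand to transform to a non-warning; the inner IH then provides, for each operand, a matching stable bound in $\Fsem{\fpprog}$, and the $\Ssem{}{}{}$-composition preserves the match. For Item 2, any unstable bound for $\fpop$ or for a let-binding necessarily comes from an unstable bound of some subexpression (since these constructs do not themselves create instability); the inner IH yields a warning bound for that subexpression, which the propagation guard hoists to a warning bound of the whole.

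The conditional cases are the crux and rest entirely on Lemma~\ref{lem:beta}. For $\ite{\fpbool}{\FStm_1}{\FStm_2}$ and Item 1, consider the transformed branch guarded by $\betaPos{\fpbool}$: under any shared assignment that satisfies it, Property~\ref{pt:pos_prop} entails both $\fpbool$ and $\FtoRB{\fpbool}$, so the IH-produced stable bound from $\Ssem{\FStm_1}{\env}{\I}$ aligns with the stable bound obtained in $\Ssem{\ite{\fpbool}{\FStm_1}{\FStm_2}}{\env}{\I}$ through $\propGuard{\FtoRB{\fpbool}}{\fpbool}{\Ssem{\FStm_1}{\env}{\I}}$; symmetrically for $\betaNeg{\fpbool}$ by Property~\ref{pt:neg_prop}. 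The $\stabWarning$ fallthrough discharges the premise of Item 1 vacuously. For Item 2, any unstable bound in $\Fsem{\fpprog}(\fpfun)$ for this conditional has the form $\ct{\eta_j}{\tilde{\eta}_i}{r_j}{\tilde{v}_i}{\cdot}{\unstt}$ with $i\neq j$, whose conditions entail disagreement between $\fpbool$ and $\FtoRB{\fpbool}$; the contrapositives of Lemma~\ref{lem:beta} then rule out both $\betaPos{\fpbool}$ and $\betaNeg{\fpbool}$, so the transformed program reaches the final $\stabWarning$ branch, furnishing the required warning bound. The n-ary conditional is the same argument branch by branch, using that the combined guard $\betaPos{\fpbool_i} \wedge \bigwedge_{j<i}\betaNeg{\fpbool_j}$ entails, via componentwise application of Lemma~\ref{lem:beta}, the simultaneous entry into branch $i$ in both the real and floating-point evaluations. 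The function call case is settled by the outer IH: the interpretation supplied for $\fpfun^\tau$ satisfies the lemma for the body of $\tilde{g}$, substitution of actual for formal parameters commutes with $\Ssem{}{}{}$, and the fresh error variables $\errvar'_i$ are defined precisely to match the instantiated error expressions. The for-loop unfolds to nested applications already covered.

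The main obstacle will be the n-ary conditional, where the semantics in \smartref{fig:sem} emits four families of unstable bounds (stable-branch vs.\ stable-branch across $i\neq j$ among the $\tagelsif$ branches, stable-branch vs.\ default $\FStm_n$, default vs.\ stable-branch, and the mirror cases), each of which must be shown to be covered by the single $\stabWarning$ fallthrough. Concretely, I will need to verify that the disjunction of the transformed guards, namely $\bigvee_{i=1}^{n-1}\bigl(\betaPos{\fpbool_i} \wedge \bigwedge_{j<i}\betaNeg{\fpbool_j}\bigr) \vee \bigwedge_{j=1}^{n-1}\betaNeg{\fpbool_j}$, is inconsistent with every unstable condition pair produced by $\Ssem{}{}{}$. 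This reduces, branch by branch, to the same two implications of Lemma~\ref{lem:beta}, but the combinatorics of indices $i,j$ ranging over real and floating-point branches requires careful bookkeeping, which is where the bulk of the formal PVS proof effort lies.
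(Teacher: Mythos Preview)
Your approach is essentially the paper's: structural induction on $\FStm$, with the conditional case driven by Lemma~\ref{lem:beta} and the remaining cases handled by routine propagation. The outer induction on declaration order is a reasonable way to organise the function-call case, which the paper's sketch leaves implicit.

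There is, however, one gap in your treatment of Item~2 for the binary conditional. You assert that every unstable bound for $\ite{\fpbool}{\FStm_1}{\FStm_2}$ has the form $\ct{\eta_j}{\tilde{\eta}_i}{r_j}{\tilde{v}_i}{\cdot}{\unstt}$ with $i\neq j$, whose conditions entail disagreement between $\fpbool$ and $\FtoRB{\fpbool}$. This is false: the stable-path components $\propGuard{\FtoRB{\fpbool}}{\fpbool}{\Ssem{\FStm_1}{\env}{\I}}$ and $\propGuard{\neg\FtoRB{\fpbool}}{\neg\fpbool}{\Ssem{\FStm_2}{\env}{\I}}$ pass through \emph{any} bound already present in the subexpression semantics, including unstable ones. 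So the conditional can carry the flag $\unstt$ not because $\fpbool$ and $\FtoRB{\fpbool}$ disagree at this node, but because a \emph{stable} branch was taken into a subexpression that is itself unstable. In that situation the contrapositive of Lemma~\ref{lem:beta} tells you nothing; you must instead invoke the inner inductive hypothesis on $\FStm_1$ (or $\FStm_2$) to obtain a warning bound there, and then observe that the transformed conditional propagates it. The paper's proof sketch separates this out explicitly as case~2(b). You already recognise this propagation mechanism for arithmetic operators and let-bindings, so the repair is routine, but as written your conditional argument for Item~2 is incomplete, and the same omission recurs in your n-ary case.
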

\begin{proof}[Proof Sketch]
    In the following, by abuse of notation, $\tauProg{}$ will be used with the meaning of its
    first projection $\tauStm{}$.
    Let $\prog \in \Prog$, $\fpprog = \RtoFProg{\prog}$, and
    $\tprog = \tauProgDecl{\prog}$.
    For each declaration
    $\fpfun(\tilde{x}_1,\dots,\tilde{x}_n) = \FStm$ occurring in
    $\fpprog$, there exists a declaration 
    $\fpfun^{\tau}(\tilde{x}_1,\dots,\tilde{x}_n, e_1, \dots, e_m) = \tauProg{\FStm}$ in
    $\tprog$.
    Thus a conditional tuple $c\in \Fsem{\fpprog}(\fpfun)$ if and only if
    $c\in \Ssem{\FStm}{\env}{\Fsem{\fpprog}}$.
    Additionally, let $\sigma: \{\real{\fvar_1} \dots \real{\fvar_n}\} \rightarrow \R$ and
    $\tilde{\sigma}: \{\fvar_1 \dots \fvar_n\} \rightarrow \Fp$ be two variable environments
    such that for all $i\in\{1,\dots,n\}$,
    $\FtoR{\tilde{\sigma}(\fvar_i)} = \sigma(\fvar_i)$.
    The proof proceeds by induction on the structure of the program expression $\FStm$.
    The thesis follows from the definition of
    $\tauProg{}$, the definition of $\propGuard{}{}{}$, and
    by the properties~\ref{pt:pos_prop} and \ref{pt:neg_prop}.
    The key case of the conditional expression is shown below.

    Let $\FStm \dfn \ite{\tilde{\phi}}{A}{B}$ and let
    $\tilde{\phi}_{\omega} \dfn \bigvee_{j=1}^{k} (\fpaexpr_j = \omega)$ for $\fpaexpr_1,\dots,\fpaexpr_k$
    occurring in $\tilde{\phi}$.
    The transformed version
    of $\FStm$ obtained by \smartref{def:prog_trans} is
    \begin{align*}
        \tauProg{\FStm} =\begin{aligned}[t]
         &\tagif \tilde{\phi}_{\omega} \tagthen \stabWarning\\
         &\tagelsif \betaPos{\fpbool} \tagthen \tauStm{A}
         \tagelsif \betaNeg{\fpbool} \tagthen \tauStm{B}
         \tagelse {\stabWarning}.
         \end{aligned}
    \end{align*}
    The two conclusions of \smartref{th:corr_trans} are proved separately.
    \begin{enumerate}
        \item Let $c\in \Ssem{\tauProg{\FStm}}{\env}{\Fsem{\fpprog}}$. It is possible to distinguish six cases 
        for which the floating-point result is different from the warning value $\botUnstt$. 
        Each case corresponds to a combination of real and floating-point flows for the transformed program expression
        $\FStm$.
        The proof of one representative case is shown below. The other proofs are analogous and they use
        Property~\ref{pt:pos_prop} (\resp{} Property~\ref{pt:neg_prop}) when the considered conditional tuple models the
        $\tagthen$ (\resp{} $\tagelse$) branch of the floating-point computational flow.
        
        Let $c \dfn \ct{\FtoRB{\betaPos{\tilde{\phi}}}\wedge \FtoRB{\neg\tilde{\phi}_{\omega}} \wedge \eta}
            {\betaPos{\tilde{\phi}} \wedge \neg\tilde{\phi}_{\omega} \wedge \tilde{\eta}}{\rres}{\fpres}{e}{t} \in
             \Ssem{\tauProg{\FStm}}{\env}{\Fsem{\tprog}}$.
        By \smartref{def:prop}, it follows that 
             $\ct{\eta}{\tilde{\eta}}{\rres}{\fpres}{e}{t} \in
                          \Ssem{\tauProg{A}}{\env}{\Fsem{\tprog}}$.
        By inductive hypothesis, it follows that there exists
        $\ct{\eta'}{\tilde{\eta}'}{\rres'}{\fpres'}{e'}{t'} \in
                     \Ssem{A}{\env}{\Fsem{\fpprog}}$ such that 
        $\evalFBExpr{\tilde{\sigma}}{\tilde{\eta}'} \Rightarrow
        \evalBExpr{\sigma}{\rcond} \wedge \evalFBExpr{\tilde{\sigma}}{\tilde{\eta}}$
        and $\fpres = \fpres'$.
        By definition of $\Ssem{}{}{}$, it follows that there exits
        $c' \dfn \ct{\FtoRB{\tilde{\phi}} \wedge \eta'}{\tilde{\phi} \wedge \tilde{\eta}'}{\rres'}{\fpres'}{e'}{t'} \in
                     \Ssem{\FStm}{\env}{\Fsem{\fpprog}}$.
        Finally, by Property~\ref{pt:pos_prop} it holds that
        $\evalFBExpr{\tilde{\sigma}}{\betaPos{\tilde{\phi}}} \Rightarrow
        \evalBExpr{\sigma}{\FtoRB{\tilde{\phi}}} \wedge \evalFBExpr{\tilde{\sigma}}{\tilde{\phi}}$.
        Thus, $\evalFBExpr{\tilde{\sigma}}{\betaPos{\tilde{\phi}} \wedge \neg\phi_{\omega} \wedge \tilde{\eta}'}
         \Rightarrow
        \evalBExpr{\sigma}{\FtoRB{\tilde{\phi}} \wedge \rcond}
        \wedge \evalFBExpr{\tilde{\sigma}}{\tilde{\phi} \wedge \tilde{\eta}}$, which concludes the proof for this case.
        \item  Let $c\in \Ssem{\FStm}{\env}{\Fsem{\fpprog}}$. It is possible to distinguish six cases 
        for which the stability flag of $c$ is set to unstable ($\unstt$). 
        Two of these cases correspond to a direct instability of the conditional
        $\FStm = \ite{\tilde{\phi}}{A}{B}$, while the other four are a consequence of the instability of the
        sub-expressions $A$ or $B$.
        The proofs of one representative for each of these cases are shown below.
        The proofs for the other cases are analogous.
        \begin{enumerate}
            \item Let $c \dfn \ct{\FtoRB{\tilde{\phi}} \wedge \eta_A}
                              {\neg\tilde{\phi} \wedge \tilde{\eta}_B}{\rres_A}{\fpres_B}{e_B + |r_A - \fpres_B|}{t} \in
                           \Ssem{\Stm}{\env}{\Fsem{\fpprog}}$ such that
            $\ct{\eta_A}{\tilde{\eta}_A}{\rres_A}{\fpres_A}{e_A}{t_A} \in \Ssem{A}{\env}{\Fsem{\fpprog}}$
            and
            $\ct{\eta_B}{\tilde{\eta}_B}{\rres_B}{\fpres_B}{e_B}{t_B} \in \Ssem{B}{\env}{\Fsem{\fpprog}}$.
            
            By \smartref{def:prog_trans},
            $\ct{\FtoRB{\neg \betaPos{\tilde{\phi}}}
             \wedge \FtoRB{\neg \betaNeg{\tilde{\phi}}} \wedge \neg \FtoRB{\tilde{\phi}_{\omega}}}
            {\neg \betaPos{\tilde{\phi}}  \wedge
            \neg \betaNeg{\tilde{\phi}} \wedge \neg\tilde{\phi}_{\omega}}{\botUnstt}{\botUnstt}{e}{t}
            \in \Ssem{\tauProg{\Stm}}{\env}{\Fsem{\tprog}}$
            and
            $\ct{\FtoRB{\tilde{\phi}_{\omega}}}{\tilde{\phi}_{\omega}}{\botUnstt}{\botUnstt}{e}{t}
            \in \Ssem{\tauProg{\Stm}}{\env}{\Fsem{\tprog}}$.
            It is possible to distinguish two cases.
            If $\tilde{\phi}_{\omega}$ holds, the thesis follows directly.
            Otherwise, if $\neg\tilde{\phi}_{\omega}$ holds, from Property~\ref{pt:pos_prop}, it follows that
            $\evalBExpr{\sigma}{\FtoRB{\tilde{\phi}} \wedge \eta_A}
            \Rightarrow \evalFBExpr{\tilde{\sigma}}{\neg\betaNeg{\tilde{\phi}}}$
            and by Property~\ref{pt:neg_prop}, it follows that
            $\evalBExpr{\sigma}{\neg \tilde{\phi} \wedge \tilde{\eta}_B}
            \Rightarrow \evalFBExpr{\tilde{\sigma}}{\neg\betaPos{\tilde{\phi}}}$.
            Thus, $\evalBExpr{\sigma}{\FtoRB{\tilde{\phi}} \wedge \eta_A}
            \wedge \evalFBExpr{\tilde{\sigma}}{\neg \tilde{\phi} \wedge \tilde{\eta}_B}
            \Rightarrow
            \evalFBExpr{\tilde{\sigma}}{\neg\betaNeg{\tilde{\phi}} 
            \wedge \neg\betaPos{\tilde{\phi}} \wedge \neg\tilde{\phi}_{\omega}}$,
            from which the thesis follows.
            \item Let $c \dfn \ct{\FtoRB{\tilde{\phi}} \wedge \eta_A}
                              {\tilde{\phi} \wedge \tilde{\eta}_A}{\rres_A}{\fpres_A}{e_A}{t} \in
                           \Ssem{\Stm}{\env}{\Fsem{\fpprog}}$ such that
            $\ct{\eta_A}{\tilde{\eta}_A}{\rres_A}{\fpres_A}{e_A}{\unstt} \in \Ssem{A}{\env}{\Fsem{\fpprog}}$.
            By inductive hypothesis, there exists $\ct{\eta'_A}{\widetilde{\eta}'_A}{\botUnstt}{\botUnstt}{e}{t}
            \in \Ssem{\tauProg{A}}{\env}{\Fsem{\tprog}}$ such that
            $\evalBExpr{\sigma}{\eta_A} \wedge \evalFBExpr{\tilde{\sigma}}{\widetilde{\eta}_A}
                        \Rightarrow \evalFBExpr{\tilde{\sigma}}{\widetilde{\eta}'_A}$.
            By \smartref{def:prop} and the Definition of $\Ssem{}{}{}$, there exists       
            $c' \in \Ssem{\tauProg{\Stm}}{\env}{\Fsem{\tprog}}$ of the following form
            $c' \dfn \ct{\FtoRB{\tilde{\phi}}\wedge \eta'_A}{\tilde{\phi}
            \wedge \widetilde{\eta}'_A}{\botUnstt}{\botUnstt}{e}{t}$.
            It directly follows that $\evalBExpr{\sigma}{\FtoRB{\tilde{\phi}} \wedge \eta_A}
            \wedge \evalFBExpr{\tilde{\sigma}}{\tilde{\phi} \wedge \widetilde{\eta}_A}
                        \Rightarrow \evalFBExpr{\tilde{\sigma}}{\widetilde{\eta}'_A}$,
            from which the thesis follows.
        \end{enumerate}
    \end{enumerate}
\end{proof}
The program transformation defined in \smartref{def:prog_trans} has
been formalized and \smartref{th:corr_trans} has been proven in PVS.\footnote{This formalization is available at \url{https://shemesh.larc.nasa.gov/fm/PRECiSA}.}
It follows that the straightforward floating-point implementation of the original program and the transformed program return the same output when the transformed program does not emit a warning.

\begin{theorem}
    \label{th:corr_trans2}
    Given $P\in\Prog$, for all function $\fpfun(\fvar_1,\ldots,\fvar_n)=\FStm \in \RtoFProg{P}$,
    let $\fpfun^{\tau}(\fvar_1,\ldots,\fvar_n,$
    $\evar_1,\dots,\evar_m) \in \tauProgDecl{P}$ be its transformed version.
    It holds that
    $$\begin{aligned}[t]
    &\fpfun^{\tau}(\fvar_1,\ldots,\fvar_n,\evar_1,\dots,\evar_m) \neq \stabWarning\\
    &\iff\\
    &\fpfun(\fvar_1,\ldots,\fvar_n) =
       \fpfun^{\tau}(\fvar_1,\ldots,\fvar_n,\evar_1,\dots,\evar_m)
    \end{aligned}$$
     where $\fpfun^{\tau}(\fvar_1,\ldots,\fvar_n,\evar_1,\dots,\evar_m) \in \tauProgDecl{P}$.
\end{theorem}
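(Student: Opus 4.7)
The plan is to reduce Theorem \ref{th:corr_trans2} to the already-proven Lemma \ref{th:corr_trans}. The denotational semantics $\Fsem{}$ of \smartref{sec:sem} is such that every concrete execution of a function under a given variable assignment corresponds to exactly one non-spurious conditional error bound, namely the one whose joint condition $\rcond \wedge \fpcond$ is satisfied by that assignment. The pointwise equality claimed in Theorem \ref{th:corr_trans2} is therefore exactly what Lemma \ref{th:corr_trans} asserts, re-read on the singled-out tuple.

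For the forward direction, I would fix an assignment $\tilde{\sigma}: \{\fvar_1,\ldots,\fvar_n\} \rightarrow \Fp$ together with its real-valued shadow $\sigma$ satisfying $\FtoR{\tilde{\sigma}(\fvar_i)} = \sigma(\real{\fvar_i})$ and an instantiation of the error parameters, and assume $\tilde{v}' \coloneq \fpfun^{\tau}(\fvar_1,\ldots,\fvar_n,\evar_1,\ldots,\evar_m) \neq \stabWarning$ under $\tilde{\sigma}$. By construction of $\Ssem{}{}{}$ and $\Dsem{}{}$, this execution is captured by a tuple $\ct{\rcond'}{\fpcond'}{\rres'}{\tilde{v}'}{e'}{t'} \in \Fsem{\tprog}(\fpfun^{\tau})$ whose floating-point condition is satisfied by $\tilde{\sigma}$ and whose floating-point result $\tilde{v}'$ differs from $\botUnstt$. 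Part~1 of Lemma \ref{th:corr_trans} then produces a stable tuple $\ct{\rcond}{\fpcond}{\rres}{\fpres}{e}{\stt} \in \Fsem{\fpprog}(\fpfun)$ with $\fpres = \tilde{v}'$ and $\evalFBExpr{\tilde{\sigma}}{\fpcond'} \Rightarrow \evalFBExpr{\tilde{\sigma}}{\fpcond}$. Since $\fpcond'$ is satisfied by $\tilde{\sigma}$, so is $\fpcond$, and therefore $\fpfun(\fvar_1,\ldots,\fvar_n) = \fpres = \tilde{v}' = \fpfun^{\tau}(\fvar_1,\ldots,\fvar_n,\evar_1,\ldots,\evar_m)$.

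The backward direction is almost immediate. The straightforward floating-point implementation $\fpfun \in \RtoFProg{P}$ is obtained from a real-valued program $P \in \Prog$ by applying $\RtoFA{}$ and $\RtoFB{}$ to its arithmetic and Boolean subexpressions. Real-valued programs never contain the warning construct $\stabWarning$, and neither $\RtoFA{}$ nor $\RtoFB{}$ introduces one. A trivial induction on the syntax of $\FStm$ therefore shows that $\fpfun(\fvar_1,\ldots,\fvar_n)$ cannot evaluate to $\stabWarning$, so from $\fpfun(\fvar_1,\ldots,\fvar_n) = \fpfun^{\tau}(\fvar_1,\ldots,\fvar_n,\evar_1,\ldots,\evar_m)$ one immediately concludes $\fpfun^{\tau}(\fvar_1,\ldots,\fvar_n,\evar_1,\ldots,\evar_m) \neq \stabWarning$.

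The only non-routine step is the bookkeeping lemma relating a concrete floating-point evaluation to a unique non-spurious tuple in the corresponding denotation. This requires a structural induction mirroring the clauses of $\Ssem{}{}{}$ in \smartref{fig:sem}, together with the observation that $\propGuard{}{}{}$ drops tuples with unsatisfiable combined conditions and that the semantics of each syntactic construct enumerates mutually exclusive cases indexed by the branches taken. Once this correspondence is made precise, the remainder of the argument is a direct instantiation of Lemma \ref{th:corr_trans} plus the trivial fact that $\RtoFProg{}$ does not synthesize warnings.
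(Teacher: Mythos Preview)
Your proposal is correct and follows the paper's own route: the paper's proof is the single line ``It follows directly from \smartref{th:corr_trans},'' and your forward direction is precisely an unpacking of Part~1 of that lemma, while your backward direction supplies the obvious observation (left implicit in the paper) that $\RtoFProg{}$ never introduces $\stabWarning$, so $\fpfun$ cannot evaluate to it. The bookkeeping lemma you flag---linking a concrete evaluation under $\tilde{\sigma}$ to the unique non-spurious tuple in $\Fsem{}$ whose floating-point guard it satisfies---is indeed the only piece of glue needed, and it is tacitly assumed throughout \smartref{sec:sem}; your plan to prove it by structural induction mirroring the clauses of $\Ssem{}{}{}$ is the natural one.
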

\begin{proof}
    It follows directly from \smartref{th:corr_trans}.
\end{proof}
%
The intended semantics of the floating-point transformed program $\tauProgDecl{P}$ is the real-valued semantics of the original program $P$, i.e., the real-valued semantics of the transformed program $\FtoRProg{\tauProgDecl{P}}$ is not relevant for the notion of correctness considered in this work.
Therefore, even if the transformed program presents unstable tests \wrt{} $\FtoRProg{\tauProgDecl{P}}$, \smartref{th:corr_trans3} ensures that its floating-point control flow preserves the control flow of stable tests in the original specification $P$ on real arithmetic.
The difference between the real number specification $P$ and the transformed floating-point implementation $\tauProgDecl{P}$ is bounded by the error occurring in the straightforward implementation of $P$, $\RtoFProg{P}$, 
assuming that real and floating-point flows always coincide.
This assumption is known as \emph{stable test assumption}.
In this modality, the error corresponding to the unstable cases is not considered, and the overall error corresponds to the error associated uniquely to the stable cases.
\begin{theorem}[Program Transformation Correctness]
    \label{th:corr_trans3}
    Given $P\in\Prog$, for all $f(\rvar_1,\ldots,\rvar_n)=\stm \in P$,
    let $\fpfun^{\tau}(\fvar_1,\ldots,\fvar_n,\evar_1,\dots,\evar_m) \in \tauProgDecl{P}$ be its transformed
    floating-point version.
    Let $\sigma: \{\rvar_1 \dots \rvar_n\} \rightarrow \R$, and
    $\tilde{\sigma}: \{\fvar_1 \dots \fvar_n\} \rightarrow \Fp$,
    such that for all $i\in\{1,\dots,n\}$,
    $\FtoR{\tilde{\sigma}(\fvar_i)} = \sigma(\rvar_i)$, it holds that
    $$\begin{aligned}[t]
    &\fpfun^{\tau}(\fvar_1,\ldots,\fvar_n,\evar_1,\dots,\evar_m) \neq \stabWarning\\
    &\iff\\
    &|f(\rvar_1,\ldots,\rvar_n) -
       \fpfun^{\tau}(\fvar_1,\ldots,\fvar_n,\evar_1,\dots,\evar_m)| \leq e_{\fpfun}
    \end{aligned}$$
     where $\fpfun^{\tau}(\fvar_1,\ldots,\fvar_n,\evar_1,\dots,\evar_m) \in \tauProgDecl{P}$ and
     $e_{\fpfun} = \mathit{max}\{ e \mid \forall 1 \leq i \leq n,
     \ct{\rcond}{\fpcond}{\rres}{\fpres}{e}{t} \in \Fsem{\RtoFProg{P}}(\fpfun), t = \stt\}$.
\end{theorem}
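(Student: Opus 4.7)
The plan is to derive \smartref{th:corr_trans3} by combining the three earlier results in the expected way, with \smartref{th:corr_trans} doing the essential work of restricting attention to stable paths. First I would dispatch the reverse direction by a well-formedness observation: the expression $|f(\rvar_1,\ldots,\rvar_n) - \fpfun^{\tau}(\fvar_1,\ldots,\fvar_n,\evar_1,\ldots,\evar_m)|$ is a well-defined real number only when $\fpfun^{\tau}(\fvar_1,\ldots,\fvar_n,\evar_1,\ldots,\evar_m) \neq \stabWarning$, since $\stabWarning$ is an exceptional token rather than a numerical value; thus the bound being well-defined already excludes the warning case.

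For the forward direction I would assume $\fpfun^{\tau}(\fvar_1,\ldots,\fvar_n,\evar_1,\ldots,\evar_m) \neq \stabWarning$ and proceed in three steps. First, by \smartref{th:corr_trans2}, the value of $\fpfun^{\tau}$ on the given arguments equals the value of the straightforward floating-point implementation $\fpfun(\fvar_1,\ldots,\fvar_n) \in \RtoFProg{P}$ on the corresponding inputs. Hence it suffices to bound $|f(\rvar_1,\ldots,\rvar_n) - \fpfun(\fvar_1,\ldots,\fvar_n)|$ by $e_{\fpfun}$. Second, I would invoke part~(1) of \smartref{th:corr_trans} to extract, from the non-warning conditional error bound activated in $\Fsem{\tauProgDecl{P}}(\fpfun^{\tau})$ under $\sigma, \tilde{\sigma}$, a matching \emph{stable} conditional error bound $\ct{\rcond}{\fpcond}{\rres}{\fpres}{e}{\stt} \in \Fsem{\RtoFProg{P}}(\fpfun)$ whose floating-point guard is satisfied by $\tilde{\sigma}$, whose real guard is satisfied by $\sigma$, and whose floating-point result coincides with the computed value. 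Third, by \smartref{th:ro-error} applied to this stable CEB, the error expression $e$ correctly over-approximates $|\rres - \fpres|$ under the given valuations, which yields $|f(\rvar_1,\ldots,\rvar_n) - \fpfun(\fvar_1,\ldots,\fvar_n)| \leq e$. Since $e$ comes from a CEB with stability flag $\stt$, the definition of $e_{\fpfun}$ as the maximum over stable CEBs alone gives $e \leq e_{\fpfun}$, closing the bound.

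The key conceptual point, and the reason this theorem strengthens \smartref{th:ro-error} from a max over all CEBs to a max over stable CEBs only, is precisely the guarantee provided by part~(1) of \smartref{th:corr_trans}: the $\betaPos{}$ and $\betaNeg{}$ approximations in \smartref{def:prog_trans} tighten each guard so that control flow proceeds past a conditional only when real and floating-point paths are provably in agreement, and any potentially unstable branch is rerouted to $\stabWarning$. So the absence of a warning certifies that execution followed a stable path, validating the tighter bound.

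The main obstacle I anticipate is the careful bookkeeping needed to align the semantic CEBs of the transformed program with those of the untransformed floating-point program, including the correct instantiation of the symbolic error parameters $\evar_1,\ldots,\evar_m$ introduced in \smartref{def:prog_trans} with their concrete round-off values from the semantics. This alignment is already handled internally by \smartref{th:corr_trans} through the substitutions $\errvar'_i = \errVar{\vfpaexpr_i[\tilde{x}_j \leftarrow \tauStm{\fpaexpr_j}]_{j=1}^n}$, so the step amounts to unfolding these substitutions and appealing to monotonicity of the error bound functions; nonetheless, making this rigorous is the portion of the argument that most benefits from being mechanized in PVS rather than written out by hand.
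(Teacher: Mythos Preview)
Your proposal is correct and follows essentially the same route as the paper, which simply states that the result follows from \smartref{th:ro-error} and \smartref{th:corr_trans2}. You supply more detail than the paper does; in particular, you explicitly invoke part~(1) of \smartref{th:corr_trans} to justify that the activated conditional error bound carries the stability flag~$\stt$, which is what permits taking the maximum over stable CEBs only rather than over all CEBs as in \smartref{th:ro-error}---a step the paper's one-line proof leaves implicit.
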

\begin{proof}
    It follows from \smartref{th:ro-error} and \smartref{th:corr_trans2}.
\end{proof}

\section{Automatic Generation and Verification of Test-Stable C Code}
\label{sec:cgen}

This section presents a formal approach to automatically generate and formally verify a test-stable \Clang{} implementation of an algorithm from its real-valued PVS specification.
This approach relies on several tools: the interactive prover PVS, the
static analyzer \precisa{}, the global optimizer
Kodiak~\cite{SmithMNM15}\footnote{Kodiak is available from \url{https://shemesh.larc.nasa.gov/fm/Kodiak/}.}, and the static analyzer of C code \Framac{}.
The input is a real-valued program expressed in the \PVS{} specification language and the desired floating-point precision (single and double precision are supported).
In addition, initial ranges for the input variables can be provided.
The output is an annotated C program that is guaranteed to emit a warning when real and floating-point paths diverge in the original program.
An overview of the approach is depicted in \smartref{fig:diagram}.
\begin{figure}[th]
\centering
\includegraphics[scale = 0.35]{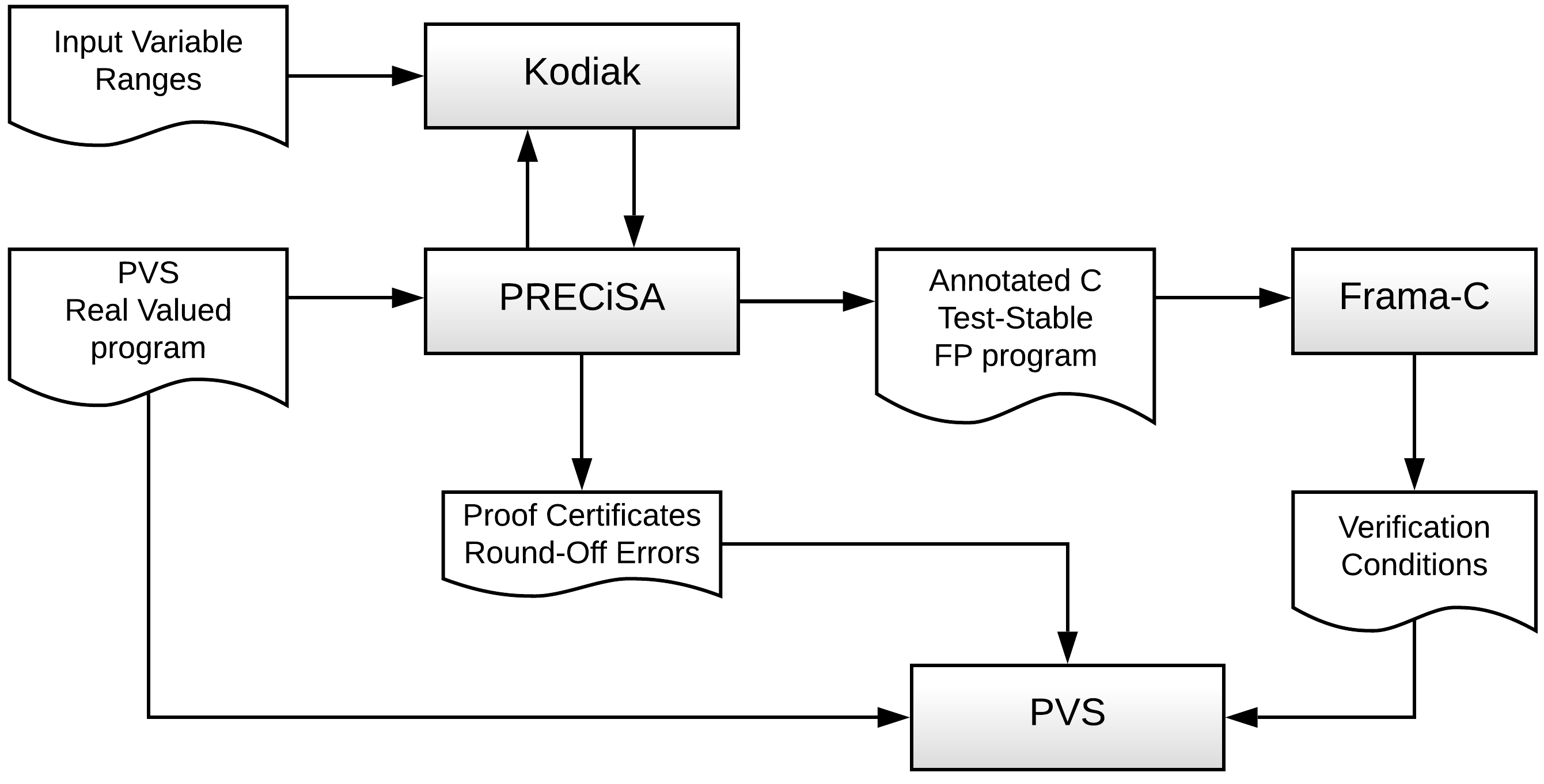}
\caption{Toolchain for automatically generate and verify test-stable C code.}
    \label{fig:diagram}
\end{figure}

\precisa{}\footnote{The \tool{} distribution is available at \url{https://github.com/nasa/PRECiSA}.} (Program Round-off Error Certifier via Static Analysis)~\cite{MoscatoTDM17,TitoloFMM18} is a static analyzer for floating-point programs. 
\precisa{} accepts as input a floating-point program and computes a sound over-approximation of the accumulated round-off error that may occur for each combination of real and floating-point computational flows.
PRECiSA implements a variant of the semantics defined in \smartref{sec:sem} and uses abstract interpretation~\cite{CousotC77} techniques to avoid the state explosion problem derived from the combination of al real and floating-point flows~\cite{TitoloFMM18}.
\precisa{} is able to reason on the differences between real and floating-point computational flows. Thus, it can compute the round-off errors associated with both stable and unstable cases separately.
If needed, \precisa{} supports the \emph{stable test assumption}, which assumes that real and floating-point flows always coincide.

In this work, \precisa{} is extended to implement the transformation defined in \smartref{sec:transformation} and to generate the corresponding \Clang{} code.
Given a real-valued program $P$ and a desired floating-point format (single or double precision), \precisa{} converts $P$ into its straightforward floating-point implementation $\fpProg := \RtoFProg{P}$.
%
%
Integer operations, variables, and constants are left unchanged since they do not carry round-off errors.
Subsequently, the transformation presented in \smartref{sec:transformation} is applied to $\fpProg$.

The transformed program is then converted into \Clang{} syntax with
ACSL Specification Language annotations.
\ACSL{}~\cite{BaudinCFMMMP16} is a behavioral specification language for \Clang{} programs centered on the notion of function contract.
It is used to state pre- and post-conditions, assertions, and invariants.

For each function $\tilde{f}^{\tau}$ in the transformed program, a \Clang{} procedure is automatically generated.
In addition, each function $f$ in the original specification is expressed as a logic axiomatic function in \ACSL{} syntax.
%
\ACSL{} preconditions are added to relate each \Clang{} floating-point expression with its logic real-valued counterpart through the error variable representing its round-off error.
As mentioned in \smartref{sec:transformation}, a new error variable $e \dfn \errVar{\vfpaexpr}$ is introduced for each floating-point arithmetic expression $\vfpaexpr$ occurring in the conditional tests. 
For each new error variable, a precondition stating that $|\vfpaexpr - \FtoRA{\vfpaexpr}| \leq e$ is added.
A post-condition is added for each function stating that, when the transformed function $\tilde{f}^{\tau}$ does not emit a warning, the difference between $\tilde{f}^{\tau}$ and its real-number specification $f$ is at most the round-off error that would occur in the straightforward floating-point implementation of $f$ assuming the stable test assumption (see \smartref{th:corr_trans3}).
The above mentioned round-off errors are symbolically estimated by \precisa{}.
Besides the transformed \Clang{} program, PVS certificates are generated to ensure the soundness of the computed estimations with respect to  the floating-point IEEE-754 standard~\cite{IEEE754floating}.
These certificates can be automatically discharged in \PVS{} thanks to proof strategies that recursively inspect the round-off error expression and applies the correspondent lemmas formalized in the PVS floating-point round-off error formalization~\cite{BoldoMunoz06}.

The tool suite \FramaC{}~\cite{KirchnerKPSY15} is used to compute a set of verification conditions (VCs) stating the relationship between the transformed floating-point program and the original real-valued specification.
\FramaC{} includes several static analyzers for the C language that
support ACSL annotations~\cite{BaudinCFMMMP16}.
The \FramaC{} WP plug-in implements the weakest precondition calculus for ACSL annotations through C programs.
For each annotation, \FramaC{} computes a set of verification conditions in the form of mathematical first-order logic formulas.
These verification conditions can be proved by a combination of external automated theorem provers, proof assistants, and SMT solvers.

In this paper, the WP plug-in has been extended to emit VCs in the \PVS{} specification language.
This extension relates the proof obligations generated by \Framac{} with the certificates emitted by \precisa{}.
These certificates ensure that the error bounds used to compute the program transformation are correct.
\begin{example}
    \label{ex:symb_vc}
Consider the functions $\tautcoa$ and $\tauvmd$,
defined in \smartref{ex:tran}, and their real specifications $\rtcoa$
and $\rvmd$, respectively.
The verification condition computed by \Framac{} for the function $\tcoa$ is the following.
\begin{align*}
\mathit{\varphi}_{\tautcoa} \dfn
& \forall e,s,v,e_s,e_v\in\R, \tilde{s}, \tilde{v}, \widetilde{res}\in\Fp\\
& (\widetilde{res} \neq \stabWarning \wedge e \geq 0 \wedge |\tilde{v} - v| \leq e_v \wedge |\tilde{s} - s| \leq e_s\\
& \wedge |(\tilde{s} \tilde{*} \tilde{v}) - (v * s)| \leq e \wedge \varphi'\\
& \Rightarrow |\widetilde{res} - \rtcoa(s, v)| \leq \ebound{\tilde{/}}{s,e_s,v,e_v})
\end{align*}
The formula $\varphi'$
models the syntactic structure of $\tautcoa$ and is defined as follows:
\begin{align*}
\varphi' \dfn
&(\tilde{s}\tilde{*}\tilde{v} < -e \Rightarrow \widetilde{res} = -(\tilde{s}\tilde{/}\tilde{v})) \wedge (\tilde{s}\tilde{*}\tilde{v} \geq -e \Rightarrow (\widetilde{res} = 0 \wedge \tilde{s}\tilde{*}\tilde{v} \geq e)).
\end{align*}
The variable $e$ denotes the round-off error of the expression $\tilde{s} \tilde{*} \tilde{v}$, which is introduced when the Boolean approximations $\betaPos{}$ and $\betaNeg{}$ are applied. 
The variable $\widetilde{res}$ denotes the result of the transformed function $\tautcoa$.
The validity of this verification condition follows from: (1) the equality between $\widetilde{res}$ and the result of $\tcoa$ when the transformed function does not emit a warning, and (2) the \precisa{} certificate stating the correctness of the symbolic round-off error bound $\ebound{\tilde{/}}{s,e_s,v,e_v}$.

The verification conditions computed by \Framac{} for the function $\vmd$ is the following.
\begin{align*}
\mathit{\varphi}_{\tauvmd} \dfn
& \forall e,s,v,e_s,e_v\in\R, \tilde{s}, \tilde{v}, \widetilde{res}\in\Fp\\
& \widetilde{res}_{\tauvmd} \neq \stabWarning \wedge \widetilde{res}_{\tautcoa} \neq \stabWarning\\
& e \geq 0 \wedge |\tilde{v} - v| \leq e_v
  |\tilde{s} - s| \leq e_s \wedge
  |(\tilde{s}\tilde{*}\tilde{v}) - (v * s)| \leq e\\
& \widetilde{res}_{\tauvmd} = \tilde{s} \tilde{+} (\widetilde{res}_{\tautcoa} \tilde{*} \tilde{v}) \wedge
\\&\mathit{\varphi}_{\tautcoa}[\widetilde{res} \leftarrow \widetilde{res}_{\tautcoa}]
  \wedge \varphi'[\widetilde{res} \leftarrow \widetilde{res}_{\tautcoa}]\\
& \Rightarrow
 |\widetilde{res}_{\tauvmd} - \rvmd(s, v)| \leq e_{\vmd},
\end{align*}
where $e_{\vmd}$ is the symbolic error computed by \precisa{} for $\vmd$, \ie{} $|\vmd(\tilde{s},\tilde{v}) - \rvmd(s,v)| \leq e_{\vmd}$, and the variable $\widetilde{res}_{\tauvmd}$ denotes the result of the transformed function $\tauvmd$.
In addition,
$\mathit{\varphi}_{\tautcoa}[\widetilde{res} \leftarrow \widetilde{res}_{\tautcoa}]$ expresses the verification condition of the function $\tcoa$ where $\widetilde{res}_{\tautcoa}$ denotes the result of the function $\tautcoa$.
The proof proceeds similarly to the one for $\tcoa$. 
%
\end{example}

\precisa{} handles programs with symbolic parameters and generates a symbolic expression modeling an over-estimation of the round-off error that may occur.
Given input ranges for the variables, a numerical evaluation of the
symbolic expressions is performed in \precisa{} with the help of
Kodiak, a rigorous global optimizer for real-valued expressions.
Kodiak performs a branch-and-bound search that computes a sound
enclosure for a symbolic error expression using either interval
arithmetic or Bernstein polynomial basis.
The algorithm recursively splits the domain of the function into smaller subdomains and computes an enclosure of the original expression in these subdomains.
The recursion stops when a precise enclosure is found, based on a given precision, or when a given maximum recursion depth is reached.  
The output of the algorithm is a numerical enclosure for the symbolic error expression.
As already mentioned, \precisa{} emits certificates ensuring the correctness of both symbolic and numerical error bounds.
Therefore, when the input ranges for the parameters are known, it is possible to instantiate the error variables in the transformed program with numerical values representing a provably correct round-off error over-estimation.

\begin{example}
Consider the symbolic verification conditions shown in \smartref{ex:symb_vc} for $\tcoa$ and $\vmd$.
Assume $\tilde{s}$ ranges between 0 and 1000, and $\tilde{v}$ between 1 and 200.
\begin{align*}
    \mathit{\varphi}_{\tautcoa} \dfn
    &\forall \tilde{s},\tilde{v}, \widetilde{res}_{\tcoa} \in\Fp, 0 \leq \tilde{s} \leq 1000 \wedge
      1 \leq \tilde{v} \leq 200 \wedge |s - \tilde{s}| \leq \tfrac{1}{2} \ulp{s} \wedge\\
      &|v - \tilde{v}| \leq \tfrac{1}{2} \ulp{v} \wedge
      \varphi'[\widetilde{res} \leftarrow \widetilde{res}_{\tautcoa}, e \leftarrow \textit{4.01E-11}]\\
    &\Rightarrow
    |\widetilde{res}_{\tautcoa} - \rtcoa(s, v)| \leq \textit{7.35E-13}\\[1ex]
    \mathit{\varphi}_{\tauvmd} \dfn
    &\forall \tilde{s},\tilde{v}, \widetilde{res}_{\tcoa} \in\Fp, 0 \leq \tilde{s} \leq 1000 \wedge
      1 \leq \tilde{v} \leq 200 \wedge |s - \tilde{s}| \leq \tfrac{1}{2} \ulp{s}
      \\& \wedge
      |v - \tilde{v}| \leq \tfrac{1}{2} \ulp{v} \wedge\\
      &\widetilde{res}_{\tauvmd} = \tilde{s} \tilde{+} (\widetilde{res}_{\tautcoa} \tilde{*} \tilde{v}) \wedge
      \varphi'[\widetilde{res} \leftarrow \widetilde{res}_{\tautcoa}, e \leftarrow \textit{4.01E-11}]\\
    &\Rightarrow
    |\widetilde{res}_{\tauvmd} - \rvmd(s, v)| \leq \textit{4.43E-12}
\end{align*}
The arguments of $\tcoa$ and $\vmd$ are assumed to be the nearest floats to the arguments of the real-valued algorithms $\rtcoa$ and $\rvmd$, \resp{}.
Similarly to \smartref{ex:symb_vc}, the proof of these verification conditions follows from the fact that $\widetilde{res}_{\tautcoa}$ is equal to $\tcoa(\tilde{s},\tilde{v})$,
$\widetilde{res}_{\tauvmd}$ is equal to $\vmd(\tilde{s},\tilde{v})$,
and from the numerical certificate output by \precisa{}.
\end{example}

Proof strategies are implemented to automatically discharge the VCs generated by \Framac{} in \PVS{}.
Thus, no expertise on floating-point arithmetic is required to verify
the correctness of the generated \Clang{} code.

\section{Related Work}
\label{sec:related}

The related work is divided into two main categories: (i) analysis and verification of numerical properties of C code and (ii) program optimizations and precision allocation tools that aim at improving both efficiency and precision of finite-precision programs.
\subsection{Analysis of numerical properties of C programs}

Several tools are available for analyzing numerical aspects of C programs.
In this work, the \FramaC{}~\cite{KirchnerKPSY15} platform is used. 
As already mentioned, \FramaC{} is a collaborative and extensible platform dedicated to the analysis and verification of \Clang{} code.
It provides a series of ready-to-use plug-ins that perform different tasks and collaborate with each other.
In particular, this work uses the WP plug-in that is based on the weakest precondition calculus.
ACSL annotations are translated in proof obligations that are submitted to a set of external provers.
\FramaC{} provides support for several external provers such as Coq~\cite{BertotC04} and Alt-Ergo~\cite{ConchonCKL08}, as well as SMT solvers such as Yices~\cite{Dutertre14}, Z3~\cite{MouraB08}, CVC3~\cite{BarrettT07} (through the Why~\cite{BobotFMP15} platform).

Support for floating-point round-off error analysis in \FramaC{} is provided by the integration with the tool Gappa~\cite{DinechinLM11}.
However, the applicability of Gappa is limited to straight-line
programs without conditionals. Gappa's ability to verify more complex programs requires adding additional ACSL intermediate assertions and providing hints through annotation that may be unfeasible to automatically generate.
The interactive theorem prover Coq can also be used to prove verification conditions on floating-point numbers thanks to the formalization defined in \cite{BoldoM10}.
Nevertheless, Coq tactics need to be implemented to automatize the verification process.
Several approaches have been proposed for the verification of numerical \Clang{} code by using \FramaC{} in combination with \Gappa{} and/or Coq \cite{BoldoF07,BoldoM11,TitoloMMDB18}.
These methods were successfully applied to the formal verification of different software:
wave propagation differential equations~\cite{BoldoCFMMW13},
a pairwise state-based conflict detection algorithm~\cite{GoodloeMKC13},
an aircraft position encoding algorithm~\cite{TitoloMMDB18}, and
industrial software related to inertial navigation~\cite{Marche14}. 
In \cite{MoscatoTFM19}, an instance of the technique presented in this
paper is used to verify a specific case study of a point-in-polygon
containment algorithm. In contrast to the present work, the
verification conditions generated by \FramaC{} are manually proven
in PVS.
The techniques presented in the current work has been fully automated
and do not require user intervention in either the specification or
the verification of the C code.
Indeed, from the generation of a test-stable program to its verification, no hint, additional specification, or proving effort is required from the user.

Besides \FramaC{}, other tools are available to formally verify and analyze numerical properties of \Clang{} code.
\Fluctuat{}~\cite{GoubaultP06} is a commercial static analyzer that, given a \Clang{} program with annotations about input bounds and uncertainties on its arguments, produces an estimation of the round-off error of the program decomposed \wrt{} its provenance.
\Fluctuat{} computes the round-off error approximation by using a zonotopic abstract domain \cite{GoubaultP11} based on affine arithmetic \cite{FigueiredoS04}.
Fluctuat is able to warn about the presence of possible unstable tests in the analyzed program, as explained in \cite{GoubaultP13},
and it provides support for iterative programs by using a widening operator~\cite{GhorbalGP10, GoubaultP08}.
The static analyzer \Astree{}~\cite{CousotCFMMMR05} detects the presence of run-time exceptions such as division by zero and under and over-flows by means of sound floating-point abstract domains \cite{Mine04,ChenMC08}.
\Astree{} has been successfully  applied to automatically check the absence of runtime errors associated with floating-point computations in aerospace control software~\cite{BertraneCCFMMR15}.
More specifically, in \cite{DelmasS07}, the fly-by-wire primary software of commercial airplanes is verified.
\Astree{} and \Fluctuat{} were combined to analyze on-board software acting in the Monitoring and Safing Unit of the ATV space vehicle~\cite{BouissouCCCFGGLMMPRT09}.
Neither \Fluctuat{} nor \Astree{} emit proof certificates that can be externally checked by an external prover to validate its result.

\subsection{Precision allocation and program optimization}

Recently, several program manipulation tools have been proposed with the aim of improving the accuracy and efficiency of floating-point computations.
Among these tools, it is possible to identify two kinds of approaches: program optimization tools and precision allocation ones.

Program optimization tools aim at improving the accuracy of floating-point programs by rewriting arithmetic expressions in equivalent ones with a lower accumulated round-off error.
Herbie~\cite{PanchekhaSWT15} automatically improves the precision of floating-point programs through a heuristic search.
Herbie detects the expressions where rounding-errors occur and it applies a series of rewriting and simplification rules.
It generates a set of transformed programs that are equivalent to the original one but potentially more accurate.
The rewriting and simplification process is then applied recursively to the generated transformed programs until the most accurate program is obtained.
Similarly, AutoRNP~\cite{YiCMJ19} is a tool that detects and repairs high floating-point errors in numerical libraries.
CoHD~\cite{ThevenouxLM15} is a source-to-source transformer for C code that automatically compensates for the round-off errors of some basic floating-point operations.
SyHD~\cite{ThevenouxLM17} is a C code optimizer that explores a set of programs generated by CoDH and selects the one with the best accuracy and computation-time trade-off.
Sardana~\cite{IoualalenM13}, given a Lustre~\cite{CaspiPHP87} program, produces a set of equivalent programs with simplified arithmetic expressions. Then, it selects the ones for which a better accuracy bound can be proved.
Salsa~\cite{DamoucheMC17c} combines Sardana with techniques for intra-procedure~\cite{DamoucheMC15} and inter-procedure~\cite{DamoucheMC17a,DamoucheMC17b} program transformation in order to improve the accuracy of a target variable in larger pieces of code containing assignments and control structures.

Precision allocation (or tuning) tools aim at selecting the lowest floating-point precision for the program variables that is enough to achieve the desired accuracy.
The aim of tuning tools it to avoid using more precision than needed in finite-precision computations in order to improve the performance of the program.
Rosa~\cite{DarulovaK14,DarulovaK17} compiles an ideal real-valued program in a finite-precision version (if it exists) that is guaranteed to meet a desired overall precision.
It proceeds by associating a certain precision (single or double floats, or 32 or 64 bits fixed-point numbers) to all the variable of the program, and by checking if the accumulated round-off error is lower than the desired precision.
This checking is based on a combination of affine arithmetic with SMT-solving.
Rosa soundly deals with unstable tests and with bounded loops when the variable appearing in the loop are restricted to a finite domain.
\FPTuner~\cite{ChiangBBSGR17} implements a rigorous approach to precision allocation of mixed-precision arithmetic expressions.
\FPTuner{} relies on the tool \FPTaylor{} that correctly estimates round-off errors via Symbolic Taylor Expansions~\cite{SolovyevJRG15} and emits the corresponding proof certificates in HOL-light.
Precimonius~\cite{Rubio-GonzalezNNDKSBIH13} is a dynamic tool able to identify parts of a program that can be performed at a lower precision. It generates a transformed program where each floating-point variable is typed to the lowest precision necessary to meet a set of given accuracy and performance constraints. Hence, the transformed program uses variables of lower precision and performs better than the original program.
In \cite{TitoloMFM18}, a first version of the verified source-to-source transformation presented in \smartref{sec:transformation} is defined for a fragment of the expression language of Equation~\eqref{eq:lang}.
To the best of the authors' knowledge, the program transformation
proposed in the present work is the only approach that addresses the
problem of correcting test instability for floating-point programs
with non-recursive function calls, bounded loops, and symbolic parameters.

\section{Conclusion}
\label{sec:concl}

Unstable tests, which occur when rounding errors affect
the evaluation of the guards in conditional tests, are hard to detect
and correct without the expert use of specialized tools.
This paper presents a toolchain to automatically generate and verify floating-point C code that soundly detects the presence of unstable tests \wrt{} an ideal real number specification.
This toolchain allows a user to write a target program assuming real arithmetic without having to deal with floating-point round-off errors.
The proposed toolchain relies on different formal tools and techniques that have been extended and improved to make the generation and verification processes fully automatic.

As part of the proposed toolchain, a program transformation, originally proposed in~\cite{TitoloMFM18}, has
been enhanced with  support for symbolic parameters, function calls,
and bounded loops.
This transformation instruments a generated program to emit a warning
when real and floating-point flow may diverge.
Furthermore, the static analyzer \precisa{}~\cite{MoscatoTDM17,TitoloFMM18} has
been extended with two modules. One module implements the
transformation defined in \smartref{sec:transformation}. The other
module generates the corresponding \Clang{}/ACSL{} code.
Thus, given a \PVS{} program specification written in real arithmetic and the desired precision, \precisa{} automatically generates a test-stable floating-point version in \Clang{} syntax enriched with \ACSL{} annotations.
A probably correct over-estimation of the over-all round-off error is also computed to bound the difference between the evaluation of the real number specification and its implementation using floats.
Additionally, \PVS{} proof certificates are automatically generated by \precisa{} to
ensure the correctness of the round-off error overestimations used in
the program transformation. 

The absence of unstable tests in the resulting floating-point implementation and the soundness of the computed round-off errors are automatically verified using a combination of \FramaC{}, \precisa{}, and \PVS{}.
The \FramaC/WP~\cite{KirchnerKPSY15} plug-in has been extended to generate verification conditions in \PVS{} syntax.
This extension enables a smooth integration between the proof
obligations generated by \FramaC{} and the proof certificates generated
by PRECiSA.
Having externally checkable certificates increases the level of confidence in the proposed approach.
In addition, no theorem proving expertise is required from the user
since proof strategies, which have been implemented as part of this
work, automatically discharge the verification conditions generated by
Frama-C.

To the best of authors' knowledge, this is the first automatic technique that is able to generate a formally-verified floating-point program instrumented to detect unstable tests.
The approach has been applied to a fragment of NASA's DAIDALUS software
library~\cite{MNHUDC15}, which serves as a reference implementation of minimum
operational performance standards of detect-and-avoid for unmanned aircraft
systems in FAA's DO-365.
Nevertheless, an extensive experimental evaluation is needed in order
to assess the scalability of the proposed approach and its applicability to real-world applications.

In the proposed approach, the generation of \Clang{} code and its verification are
fully-automatic. However, for-loops invariants have to be provided as part of the input real-number specification. 
The automation of this step by using loop invariant generation
techniques is planned as future work.
Another interesting future direction is the integration of the proposed approach with numerical optimization tools such as Salsa~\cite{DamoucheMC17c} and Herbie~\cite{PanchekhaSWT15}.
This integration will improve the accuracy of the mathematical expressions used inside a program and, at the same time, prevent unstable tests that may cause unexpected behaviors. 
Alternatively, the proposed approach could also be combined with tuning precision techniques \cite{DarulovaK14,ChiangBBSGR17}.
Since the program transformation lowers the over-all round-off error,  this would likely to increase the chance of finding a precision allocation meeting the target accuracy.
Finally, the authors plan to enhance the approach to support  floating-point special values and exceptions such as under- and over-flows and division by zero.

\bibliography{biblio}

\begin{thebibliography}{10}

\bibitem{BarrettT07}
C.~Barrett and Tinelli. C.
\newblock {CVC3}.
\newblock In {\em Proceedings of the 19th International Conference on Computer
  Aided Verification, {CAV} 2007}, pages 298--302, 2007.

\bibitem{BaudinCFMMMP16}
P.~Baudin, P.~Cuoq, J.~C. Filli{\^{a}}tre, C.~March{\'{e}}, B.~Monate, Y.~Moy,
  and V.~Prevosto.
\newblock {ACSL: ANSI/ISO C Specification Language, version 1.12}.
\newblock 2016.

\bibitem{BertotC04}
Y.~Bertot and P.~Cast{\'{e}}ran.
\newblock {\em Interactive Theorem Proving and Program Development - Coq'Art:
  The Calculus of Inductive Constructions}.
\newblock Texts in Theoretical Computer Science. An {EATCS} Series. Springer,
  2004.

\bibitem{BertraneCCFMMR15}
J.~Bertrane, P.~Cousot, R.~Cousot, J.~Feret, L.~Mauborgne, A.~Min{\'{e}}, and
  X.~Rival.
\newblock {Static Analysis and Verification of Aerospace Software by Abstract
  Interpretation}.
\newblock {\em Foundations and Trends in Programming Languages},
  2(2-3):71--190, 2015.

\bibitem{BobotFMP15}
F.~Bobot, J.~C. Filli{\^{a}}tre, C.~March{\'{e}}, and A.~Paskevich.
\newblock Let's verify this with {Why3}.
\newblock {\em {International Journal on Software Tools for Technology
  Transfer}}, 17(6):709--727, 2015.

\bibitem{BoldoCFMMW13}
S.~Boldo, F.~Cl{\'{e}}ment, J.~C. Filli{\^{a}}tre, M.~Mayero, G.~Melquiond, and
  P.~Weis.
\newblock Wave equation numerical resolution: {A} comprehensive mechanized
  proof of a {C} program.
\newblock {\em Journal of Automatic Reasoning}, 50(4):423--456, 2013.

\bibitem{BoldoF07}
S.~Boldo and J.~C. Filli{\^{a}}tre.
\newblock Formal verification of floating-point programs.
\newblock In {\em Proceedings of ARITH18 2007}, pages 187--194. {IEEE} Computer
  Society, 2007.

\bibitem{BoldoM11}
S.~Boldo and C.~March{\'{e}}.
\newblock Formal verification of numerical programs: From {C} annotated
  programs to mechanical proofs.
\newblock {\em Mathematics in Computer Science}, 5(4):377--393, 2011.

\bibitem{BoldoM10}
S.~Boldo and G.~Melquiond.
\newblock Flocq: {A} unified library for proving floating-point algorithms in
  coq.
\newblock In {\em 20th {IEEE} Symposium on Computer Arithmetic, {ARITH} 2011},
  pages 243--252. {IEEE} Computer Society, 2011.

\bibitem{BoldoMunoz06}
S.~Boldo and C.~Mu{\~{n}}oz.
\newblock A high-level formalization of floating-point numbers in {PVS}.
\newblock Technical Report CR-2006-214298, NASA, 2006.

\bibitem{BouissouCCCFGGLMMPRT09}
O.~Bouissou, E.~Conquet, P.~Cousot, R.~Cousot, J.~Feret, E.~Goubault,
  K.~Ghorbal, D.~Lesens, L.~Mauborgne, A.~Min{\'{e}}, S.~Putot, X.~Rival, and
  M.~Turin.
\newblock {Space Software Validation using Abstract Interpretation}.
\newblock In {\em Proceedings of the International Space System Engineering
  Conference, Data Systems in Aerospace, DASIA 2009}, pages 1--7. ESA
  publications, 2009.

\bibitem{CaspiPHP87}
P.~Caspi, D.~Pilaud, N.~Halbwachs, and J.~A. Plaice.
\newblock Lustre: a declarative language for real-time programming.
\newblock In {\em Conference Record of the 14th {ACM} Symposium on Principles
  of Programming Languages, POPL 1987}, pages 178--188. ACM, 1987.

\bibitem{ChenMC08}
L.~Chen, A.~Min{\'{e}}, and P.~Cousot.
\newblock A sound floating-point polyhedra abstract domain.
\newblock In {\em Proceedings of the 6th Asian Symposium on Programming
  Languages and Systems, {APLAS} 2008}, volume 5356 of {\em Lecture Notes in
  Computer Science}, pages 3--18. Springer, 2008.

\bibitem{ChiangBBSGR17}
W.~Chiang, M.~Baranowski, I.~Briggs, A.~Solovyev, G.~Gopalakrishnan, and
  Z.~Rakamari\'c.
\newblock Rigorous floating-point mixed-precision tuning.
\newblock In {\em Proceedings of the 44th {ACM} {SIGPLAN} Symposium on
  Principles of Programming Languages, POPL 2017}, pages 300--315. ACM, 2017.

\bibitem{ConchonCKL08}
S.~Conchon, E.~Contejean, J.~Kanig, and S.~Lescuyer.
\newblock {CC(X): Semantic Combination of Congruence Closure with Solvable
  Theories}.
\newblock {\em Electronic Notes in Theoretical Computer Science}, 198(2):51 --
  69, 2008.

\bibitem{CousotC77}
P.~Cousot and R.~Cousot.
\newblock Abstract interpretation: A unified lattice model for static analysis
  of programs by construction or approximation of fixpoints.
\newblock In {\em Proceedings of the 4th ACM SIGPLAN-SIGACT Symposium on
  Principles of Programming Languages (POPL)}, pages 238--252. {ACM}, 1977.

\bibitem{CousotCFMMMR05}
P.~Cousot, R.~Cousot, J.~Feret, L.~Mauborgne, A.~Min{\'{e}}, D.~Monniaux, and
  Rival.
\newblock {The ASTRE{\'{E}} Analyzer}.
\newblock In {\em Proceedings of the 14th European Symposium on Programming
  (ESOP 2005)}, volume 3444 of {\em Lecture Notes in Computer Science}, pages
  21--30. Springer, 2005.

\bibitem{DamoucheMC17c}
N.~Damouche and M.~Martel.
\newblock {Salsa: An Automatic Tool to Improve the Numerical Accuracy of
  Programs}.
\newblock {\em 6th Workshop on Automated Formal Methods, {AFM 2017}}, 2017.

\bibitem{DamoucheMC15}
N.~Damouche, M.~Martel, and A.~Chapoutot.
\newblock Optimizing the accuracy of a rocket trajectory simulation by program
  transformation.
\newblock In {\em Proceedings of the 12th {ACM} International Conference on
  Computing Frontiers (CF'15)}, pages 40:1--40:2. {ACM}, 2015.

\bibitem{DamoucheMC17a}
N.~Damouche, M.~Martel, and A.~Chapoutot.
\newblock Improving the numerical accuracy of programs by automatic
  transformation.
\newblock {\em {International Journal on Software Tools for Technology
  Transfer}}, 19(4):427--448, 2017.

\bibitem{DamoucheMC17b}
N.~Damouche, M.~Martel, and A.~Chapoutot.
\newblock Numerical accuracy improvement by interprocedural program
  transformation.
\newblock In {\em Proceedings of the 20th International Workshop on Software
  and Compilers for Embedded Systems, {SCOPES} 2017}, pages 1--10. {ACM}, 2017.

\bibitem{DarulovaK14}
E.~Darulova and V.~Kuncak.
\newblock Sound compilation of reals.
\newblock In {\em Proceedings of the 41st Annual {ACM} {SIGPLAN-SIGACT}
  Symposium on Principles of Programming Languages {(POPL)}}, pages 235--248.
  {ACM}, 2014.

\bibitem{DarulovaK17}
E.~Darulova and V.~Kuncak.
\newblock Towards a compiler for reals.
\newblock {\em ACM Transactions on Programming Languages and Systems},
  39(2):8:1--8:28, 2017.

\bibitem{Daumas2001}
M.~Daumas, L.~Rideau, and L.~Th{\'e}ry.
\newblock {A Generic Library for Floating-Point Numbers and Its Application to
  Exact Computing}.
\newblock In {\em Proceedings of the 14th International Conference on Theorem
  Proving in Higher Order Logics}, pages 169--184. Springer Berlin Heidelberg,
  2001.

\bibitem{DinechinLM11}
F.~de~Dinechin, C.~Lauter, and G.~Melquiond.
\newblock Certifying the floating-point implementation of an elementary
  function using {G}appa.
\newblock {\em {IEEE} Trans. on Computers}, 60(2):242--253, 2011.

\bibitem{FigueiredoS04}
L.~H. de~Figueiredo and J.~Stolfi.
\newblock Affine arithmetic: Concepts and applications.
\newblock {\em Numerical Algorithms}, 37(1-4):147--158, 2004.

\bibitem{MouraB08}
L.~de~Moura and N.~Bj{\o}rner.
\newblock {Z3}: An efficient {SMT} solver.
\newblock In {\em Proceedings of the 14th International Conference on Tools and
  Algorithms for the Construction and Analysis of Systems {(TACAS)}}, pages
  337--340. Springer, 2008.

\bibitem{DelmasS07}
D.~Delmas and J.~Souyris.
\newblock Astr{\'{e}}e: From research to industry.
\newblock In {\em Proceedings of the 14th International Symposium on Static
  Analysis, {SAS} 2007}, pages 437--451, 2007.

\bibitem{Dutertre14}
B.~Dutertre.
\newblock Yices 2.2.
\newblock In {\em Proceedings of the 26th International Conference on Computer
  Aided Verification, {CAV} 2014}, volume 8559 of {\em Lecture Notes in
  Computer Science}, pages 737--744. Springer, 2014.

\bibitem{GhorbalGP10}
K.~Ghorbal, E.~Goubault, and S.~Putot.
\newblock A logical product approach to zonotope intersection.
\newblock In {\em Proceedings of the 22nd International Conference on Computer
  Aided Verification, {CAV} 2010}, volume 6174 of {\em Lecture Notes in
  Computer Science}, pages 212--226. Springer, 2010.

\bibitem{GoodloeMKC13}
A.~Goodloe, C.~Mu{\~{n}}oz, F.~Kirchner, and L.~Correnson.
\newblock Verification of numerical programs: From real numbers to floating
  point numbers.
\newblock In {\em Proceedings of {NFM} 2013}, volume 7871 of {\em Lecture Notes
  in Computer Science}, pages 441--446. Springer, 2013.

\bibitem{GoubaultP06}
E.~Goubault and S.~Putot.
\newblock Static analysis of numerical algorithms.
\newblock In {\em Proceedings of {SAS} 2006}, volume 4134 of {\em Lecture Notes
  in Computer Science}, pages 18--34. Springer, 2006.

\bibitem{GoubaultP11}
E.~Goubault and S.~Putot.
\newblock Static analysis of finite precision computations.
\newblock In {\em Proceedings of {VMCAI} 2011}, volume 6538 of {\em Lecture
  Notes in Computer Science}, pages 232--247. Springer, 2011.

\bibitem{GoubaultP13}
E.~Goubault and S.~Putot.
\newblock Robustness analysis of finite precision implementations.
\newblock In {\em Proceedings of {APLAS} 2013}, volume 8301 of {\em Lecture
  Notes in Computer Science}, pages 50--57. Springer, 2013.

\bibitem{IEEE754floating}
IEEE.
\newblock {IEEE standard for binary floating-point arithmetic}.
\newblock Technical report, Institute of Electrical and Electronics Engineers,
  2008.

\bibitem{IoualalenM13}
A.~Ioualalen and M.~Martel.
\newblock Synthesizing accurate floating-point formulas.
\newblock In {\em 24th International Conference on Application-Specific
  Systems, Architectures and Processors, {ASAP} 2013}, pages 113--116. {IEEE}
  Computer Society, 2013.

\bibitem{KirchnerKPSY15}
F.~Kirchner, N.~Kosmatov, V.~Prevosto, J.~Signoles, and B.~Yakobowski.
\newblock {Frama-C}: {A} software analysis perspective.
\newblock {\em Formal Aspects of Computing}, 27(3):573--609, 2015.

\bibitem{Marche14}
C.~March{\'{e}}.
\newblock Verification of the functional behavior of a floating-point program:
  An industrial case study.
\newblock {\em Science of Computer Programming}, 96:279--296, 2014.

\bibitem{Mine04}
A.~Min{\'{e}}.
\newblock Relational abstract domains for the detection of floating-point
  run-time errors.
\newblock In {\em Proceedings of the 13th European Symposium on Programming
  Languages and Systems, {ESOP} 2004}, volume 2986 of {\em Lecture Notes in
  Computer Science}, pages 3--17. Springer, 2004.

\bibitem{MoscatoTFM19}
M.~Moscato, L.~Titolo, M.~Feli{\'{u}}, and C.~Mu{\~{n}}oz.
\newblock Provably correct floating-point implementation of a point-in-polygon
  algorithm.
\newblock In {\em Proceedings of the 23nd International Symposium on Formal
  Methods ({FM} 2019)}, 2019.

\bibitem{MoscatoTDM17}
M.~M. Moscato, L.~Titolo, A.~Dutle, and C.~Mu{\~{n}}oz.
\newblock Automatic estimation of verified floating-point round-off errors via
  static analysis.
\newblock In {\em Proceedings of the 36th International Conference on Computer
  Safety, Reliablilty, and Security, {SAFECOMP} 2017}. Springer, 2017.

\bibitem{MNHUDC15}
C{\'e}sar Mu{\~{n}}oz, Anthony Narkawicz, George Hagen, Jason Upchurch, Aaron
  Dutle, and Mar{\'{\i}}a Consiglio.
\newblock {DAIDALUS}: {D}etect and {A}void {A}lerting {L}ogic for {U}nmanned
  {S}ystems.
\newblock In {\em Proceedings of the 34th Digital Avionics Systems Conference
  (DASC 2015)}, Prague, Czech Republic, September 2015.

\bibitem{OwreRS92}
S.~Owre, J.~Rushby, and N.~Shankar.
\newblock {PVS}: A prototype verification system.
\newblock In {\em Proceedings of the 11th International Conference on Automated
  Deduction (CADE)}, pages 748--752. Springer, 1992.

\bibitem{PanchekhaSWT15}
P.~Panchekha, A.~Sanchez{-}Stern, J.R. Wilcox, and Tatlock Z.
\newblock Automatically improving accuracy for floating point expressions.
\newblock In {\em Proceedings of the 36th {ACM} {SIGPLAN} Conference on
  Programming Language Design and Implementation, PLDI 2015}, pages 1--11.
  {ACM}, 2015.

\bibitem{Rubio-GonzalezNNDKSBIH13}
C.~Rubio-Gonz{\'a}lez, C.~Nguyen, H.D. Nguyen, J.~Demmel, W.~Kahan, K.~Sen,
  D.H. Bailey, C.~Iancu, and D.~Hough.
\newblock Precimonious: tuning assistant for floating-point precision.
\newblock In {\em International Conference for High Performance Computing,
  Networking, Storage and Analysis, SC'13}, page~27. {ACM}, 2013.

\bibitem{SmithMNM15}
A.~P. Smith, C.~Mu{\~{n}}oz, A.~J. Narkawicz, and M.~Markevicius.
\newblock A rigorous generic branch and bound solver for nonlinear problems.
\newblock In {\em Proceedings of the 17th International Symposium on Symbolic
  and Numeric Algorithms for Scientific Computing, {SYNASC} 2015}, pages
  71--78, 2015.

\bibitem{SolovyevJRG15}
A.~Solovyev, C.~Jacobsen, Z.~Rakamaric, and G.~Gopalakrishnan.
\newblock Rigorous estimation of floating-point round-off errors with {Symbolic
  Taylor Expansions}.
\newblock In {\em Proceedings of the 20th International Symposium on Formal
  Methods {(FM)}}, pages 532--550. Springer, 2015.

\bibitem{Tarski55}
A.~Tarski.
\newblock A lattice-theoretical fixpoint theorem and its applications.
\newblock In {\em Pacific Journal of Mathematics}, pages 285--309, 1955.

\bibitem{ThevenouxLM15}
L.~Th{\'{e}}venoux, P.~Langlois, and M.~Martel.
\newblock Automatic source-to-source error compensation of floating-point
  programs.
\newblock In {\em 18th {IEEE} International Conference on Computational Science
  and Engineering, {CSE} 2015}, pages 9--16. {IEEE} Computer Society, 2015.

\bibitem{ThevenouxLM17}
L.~Th{\'{e}}venoux, P.~Langlois, and M.~Martel.
\newblock Automatic source-to-source error compensation of floating-point
  programs: code synthesis to optimize accuracy and time.
\newblock {\em {Concurrency and Computation: Practice and Experience}}, 29(7),
  2017.

\bibitem{TitoloFMM18}
L.~Titolo, M.~Feli{\'{u}}, M.~Moscato, and C.~Mu{\~{n}}oz.
\newblock An abstract interpretation framework for the round-off error analysis
  of floating-point programs.
\newblock In {\em Proceedings of the 19th International Conference on
  Verification, Model Checking, and Abstract Interpretation {(VMCAI)}}, pages
  516--537. Springer, 2018.

\bibitem{TitoloMMDB18}
L.~Titolo, M.~Moscato, C.~Mu{\~{n}}oz, A.~Dutle, and F.~Bobot.
\newblock A formally verified floating-point implementation of the compact
  position reporting algorithm.
\newblock In {\em Proceedings of the 22nd International Symposium on Formal
  Methods ({FM} 2018)}, volume 10951 of {\em Lecture Notes in Computer
  Science}, pages 364--381. Springer, 2018.

\bibitem{TitoloMFM18}
L.~Titolo, C.~Mu{\~{n}}oz, M.~Feli{\'{u}}, and M.~Moscato.
\newblock Eliminating unstable tests in floating-point programs.
\newblock In {\em Proceedings of the 28th International Symposium on
  Logic-Based Program Synthesis and Transformation {(LOPSTR 2018)}}, pages
  169--183. Springer, 2018.

\bibitem{YiCMJ19}
X.~Yi, L.~Chen, X.~Mao, and T.~Ji.
\newblock Efficient automated repair of high floating-point errors in numerical
  libraries.
\newblock {\em {PACMPL}}, 3({POPL}):56:1--56:29, 2019.

\end{thebibliography}

\newpage

\pagenumbering{roman} 
\setcounter{page}{1}

\end{document}